\newtheorem{problem}{Problem}
\newcommand{\R}{{\if mm {\rm I}\mkern -3mu{\rm R}\else \leavevmode
\hbox{I}\kern -.17em\hbox{R} \fi}}
\newcommand{\bu}{\mbox{\boldmath{$u$}}}
\newcommand{\bv}{\mbox{\boldmath{$v$}}}
\newcommand{\bx}{\mbox{\boldmath{$x$}}}
\newcommand{\fb}{\mbox{\boldmath{$f$}}}
\newcommand{\bsigma}{\mbox{\boldmath{$\sigma$}}}
\newcommand{\btau}{\mbox{\boldmath{$\tau$}}}
\newcommand{\bvarepsilon}{\mbox{\boldmath{$\varepsilon$}}}
\newcommand{\bnu}{\mbox{\boldmath{$\nu$}}}
\newcommand{\bzero}{\mbox{\boldmath{$0$}}}
\newcommand{\cS}{\mbox{{${\cal S}$}}}
\newcommand{\cR}{\mbox{{${\cal R}$}}}
\def\real{\mathbb{R}}
\newtheorem{theorem}{Theorem}[section]
\newtheorem{lemma}[theorem]{Lemma}
\newtheorem{corollary}[theorem]{Corollary}
\newtheorem{definition}[theorem]{Definition}
\newtheorem{example}[theorem]{Example}
\numberwithin{equation}{section}
\author{\it The authors}
\title{\bf Time-dependent Inclusions and Sweeping Processes in Contact Mechanics}
{\author{Samir Adly$^1$ and Mircea Sofonea$^2$\\[5mm]
		{\it\small $^1$ Laboratoire XLIM,  University of Limoges}\\
		{\it \small 123 Avenue Albert Thomas, 87060 Limoges, France}
		\\
		{\it \small samir.adly@unilim.fr}\\[5mm]
	{\it \small $^2$Laboratoire de Math\'ematiques et Physique}\\
		{\it \small
			University of Perpignan Via Domitia}\\{\it\small 52 Avenue Paul Alduy, 66860 Perpignan,
			France}\\
		{\it \small sofonea@univ-perp.fr}
	}
\date{}

\begin{document}
\maketitle

\vskip8mm
\begin{abstract}
\noindent We consider a class of time-dependent inclusions  in Hilbert spaces for which we state and prove an existence and uniqueness result.
The proof is based on arguments of variational inequalities, convex analysis and fixed point theory.
Then we use this result to prove the unique weak solvability of a new class of Moreau's sweeping processes with constraints in velocity. Our results are useful in the study of mathematical models which describe the quasistatic evolution of  deformable bodies in contact with an obstacle.
To provide some examples we consider three viscoelastic  contact problems which lead to time-dependent  inclusions and sweeping processes in which the unknowns are the displacement and the velocity fields, respectively. Then we apply our abstract results in order to prove the unique weak solvability of the corresponding contact problems.

\end{abstract}

\bigskip \noindent {\bf AMS Subject Classification\,:}   49J40,  47J20, 47J22, 34G25, 58E35, 74M10, 74M15, 74G25.

\bigskip \noindent {\bf Key words\,:}
nonlinear inclusion, sweeping process, contact problem, unilateral constraint, weak solution.
\vskip15mm
\section{Introduction}\label{int}

Contact phenomena with deformable bodies arise in a large variety of industrial settings and engineering applications.  Their classical formulation leads
to  challenging nonlinear boundary value problems in which the unknowns are the displacement and the stress field. Most of these problems include unilateral constraints and represent free boundary problems. For this reason, their mathematical analysis is done by using the so-called weak formulation which, usually, is expressed in terms of variational or hemivariational inequalities in which the unknown is the displacement or the velocity field.  
Comprehensive reference in the field are \cite{AC,DL,EJK,HS,KO,Pa1, Pa2,SM}  and, more recently, \cite{SMBOOK}.

An important number of problems arising in Mechanics, Physics and
Engineering Science leads to mathematical models expressed in terms
of nonlinear time-dependent inclusions. For this reason the mathematical
literature dedicated to this field is extensive and the progress
made in the last  decades is impressive. It concerns both
results on the existence, uniqueness, regularity and behavior of
the solution for various classes of  inclusions as well
as results on the numerical approaches to the solution of the
corresponding problems. Variational and hemivariational inequalities represent 
a class of nonlinear inclusions that are associated with the subdifferential in the sense of convex analysis and the Clarke
subdifferential operator, respectively. They have made the object of various books and surveys, see \cite{HMS, MOSBOOK, NP, Pa2, SM, SMBOOK}, for instance.

The notion of ``sweeping process" was introduced by Jean Jacques Moreau in early seventies, in connexion with the study of displacement-tractions problems for elastic-plastic materials, see \cite{M1,M2,M3,M4}. There, the treatment of both theoretical and numerical aspects of sweeping processes have been
developed and their applications in unilateral mechanics  were illustrated. Since the pioneering works of Moreau, several extensions and generalizations have been considered in literature for which various existence and uniqueness results have been provided. References on the field are \cite{AHT,KM} and, more recently \cite{AH}.

The aim of this paper is two folds. The first one to introduce a new class of time-dependent inclusions and sweeping processes and to study their unique solvability. Here, the novelty arises in the special structure of the problems we consider, which are governed by two nonlinear operators, possible history-dependent, and are defined on a time interval which could be either bounded or unbounded. Moreover, one of the operators appears in the set of constraints. The second aim is to illustrate the use of these results in the study of
mathematical models arising in Contact Mechanics. In contrast with the standard variational formulations considered in the literature, the contact models
we consider here lead to time-dependent inclusions and sweeping processes, which represents the second trait of novelty of this paper.

The paper is structured as follows. In Section \ref{s2} we introduce the notation we use and the preliminaries of convex analysis and nonlinear analysis   we need in the rest of the paper. They include an existence and uniqueness result for elliptic variational inequalities and a fixed point result for almost history-dependent operators, amog others.
In Section \ref{s3} we introduce the time-dependent inclusions and prove their unique solvability, Theorem \ref{t1}.
Then, in Section  \ref{s4} we introduce the sweeping processes we are interested in and prove an existence and uniqueness result,  Theorem \ref{t2}.
Finally, in Sections \ref{s5} and \ref{s6} we illustrate the use of our abstract results in the study of three contact models with  viscoelastic materials, both in the frictionless and frictional case. In this way we provide an example of cross  fertilization between models and applications, in one hand, and the nonsmooth analysis, on the other hand.

\section{Preliminaries}\label{s2}
\setcounter{equation}0

Most of the material presented in this section is standard. Therefore, we introduce it without proofs and restrict ourselves to mention that details on the definitions and statements below can be found in the monographs
\cite{DMP1,ET, KZ, MOSBOOK} as well as in the paper \cite{AH}.

\medskip\noindent
{\bf Elements of convex analysis.} Everywhere in this paper $X$ will represent a real Hilbert space with the inner product $(\cdot,\cdot)_X$
and the associated norm   $\|\cdot\|_X$. Moreover, we denote by
$0_X$ the zero element of $X$ and by $2^{X}$ the set of parts of $X$.
	
	Assume that $J:X\to\, ]-\infty,+\infty]$ is a convex lower semicontinuous function such that $J\not\equiv\infty$, i.e., $J$ is proper. The effective domain of $J$ is the set ${\rm Dom}(J)$ defined by
	\begin{equation*}
	{\rm Dom}(J)=\{ \, u \in X\ :  J(u)<+\infty \, \}.
	\end{equation*}
	The subdifferential of $J$ (in the sense of convex analysis) is the multivalued operator $\partial J:X\to 2^{X}$
	defined by
	\begin{equation}\label{sub}
	\partial J(u)
	=
	\{ \, \xi \in X \ :\  J(v)-J(u)\ge (\xi,v-u)_X  
	\quad \forall\, v \in X \, \}.
	\end{equation}
	An element $\xi \in \partial J (u)$ (if any)
	is called a {\rm subgradient} of $J$ in $u$.
	We recall that if $u\notin{\rm Dom}(J)$ then $\partial J(u)=\emptyset$. For the above function $J$, its Legendre-Fenchel conjugate is defined as
	$J^*:X\to ]-\infty,+\infty]$,
	\[J^*(u^*)=\sup_{u\in X}\,\big((u^*,u)_X-J(u)\big).\]
	Moreover, the following equivalence holds.
	\begin{equation}\label{e}
	u^*\in\partial J(u)\quad\Longleftrightarrow\quad u\in \partial J^*(u^*).
	\end{equation}
	
	Let $C\subset X$ be a nonempty closed convex subset. 
	The function ${\rm I}_C$ defined by 
	$$
	{\rm I}_C(x)=
	\begin{cases}
	0 &\text{{\rm if} \ $x \in C$,} \\
	+\infty &\text{{\rm if} \ $x \notin C$}
	\end{cases}
	$$
	is called the indicator function of $C$. 
	Using (\ref{sub})  it follows that the subdifferential of ${\rm I}_C$ is the multivalued operator $\partial {\rm I}_C:X\to 2^{X}$
	defined by
	\begin{equation}\label{su}
	\partial {\rm I}_C(u)
	=\left\{\begin{array}{l}
	\{ \, \xi \in X\ :\  (\xi,v - u)_X \le 0 
	\ \ \ \forall\,v \in C \, \}\quad{\rm if}\ u\in C,\\[2mm]	
	\emptyset	\quad{\rm if}\ u\notin C.
	\end{array} \right.
	\end{equation}
	As usual in the convex analysis, we denote the subdifferential of the function ${\rm I}_C$ by ${\rm N}_C$, i.e.,  $\partial {\rm I}_C={\rm N}_C$.
	For a given $u\in C$,  the set $\partial {\rm I}_C(u)={\rm N}_C(u)\subset X$ represents the set of outward normals of the convex set at the point $u\in C$. Moreover, it is easy to check that
	\begin{eqnarray}
	&&\label{n1} {\rm N}_C(-u)=-{\rm N}_{-C}(u)\qquad\quad\ \forall\, u\ \mbox{such that}\ u\in -C\\ [2mm]
	&&\label{n2} {\rm N}_C(u+v)={\rm N}_{C-v}(u)\qquad\quad\forall\, u,\,v\ \mbox{such that}\ u+v\in C.
	\end{eqnarray}




	


\medskip\noindent
{\bf Variational inequalities.}
We recall that
an operator $A \colon X \to X$ is said to be
strongly monotone
if there exists $m_A>0$ such that
\begin{equation}\label{A1}
(Au-Av,u-v)_X\ge m_A\|u-v\|_X^2\qquad\forall\, u,\,v\in X.
\end{equation}
The operator $A$ is	
Lipschitz continuous if there exists a constant $L_A>0$ such that
\begin{equation}\label{A2}
\| Au - Av \|_X\le L_A \|u-v\|_X \qquad\forall\, u,\,v\in X.
\end{equation}
A function
	$j \colon K\subset X\to\mathbb{R}$ is said to be lower semicontinuous
	(l.s.c.)\ at $u\in K$ if 
	\begin{equation}\label{lsc}
	\liminf_{n\to\infty} j(u_n)\ge j(u)
\end{equation}
for each sequence $\{u_n\}\subset K$ converging to $u$ in $X$. The
function $j$ is lower semicontinuous (l.s.c.)\  if it is lower semicontinuous at every
point $u\in K$.  We now recall a classical result in the study of variational inequalities.

\begin{theorem}\label{t0}
	Let $X$ be a Hilbert space and assume that $K$ is a nonempty closed convex
	subset of $X$, $A:X\to X$ is a strongly
	monotone Lipschitz continuous operator and  $j:K\to\mathbb{R}$ is a convex lower semicontinuous  function.
	Then, for each $f\in X$, there exists a unique 
	solution of the variational inequality
	\begin{equation}\label{BC21}
	u\in K,\quad(Au,v-u)_X+j(v)-j(u)\geq (f,v-u)_X\quad\forall\,
	v\in K.
	\end{equation}
\end{theorem}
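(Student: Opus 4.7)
The plan is to reduce the variational inequality (\ref{BC21}) to a fixed point problem for a contraction on $K$, and then invoke the Banach fixed point theorem. This is the classical route for VIs of the second kind; the only subtlety is the choice of the step-size parameter that turns the map into a contraction.

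First I would set up the auxiliary problem. For a fixed $w\in K$ and a parameter $\rho>0$ (to be chosen), I would look for $u\in K$ solving
\begin{equation*}
u\in K,\quad (u,v-u)_X+\rho\,j(v)-\rho\,j(u)\ge \bigl(w-\rho(Aw-f),\,v-u\bigr)_X\quad\forall\,v\in K.
\end{equation*}
This is the Euler inequality for the minimization on $K$ of the strictly convex, coercive, lower semicontinuous functional
\begin{equation*}
\Phi(v)=\tfrac12\|v\|_X^{\,2}-\bigl(w-\rho(Aw-f),v\bigr)_X+\rho\,j(v),
\end{equation*}
whose minimizer on the closed convex set $K$ exists and is unique by the standard direct method of the calculus of variations (extending $j$ by $+\infty$ outside $K$ produces a proper, convex, l.s.c.\ function on $X$). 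Denote this unique solution by $u=\Lambda w$; the map $\Lambda:K\to K$ is well defined.

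Second, I would prove that $\Lambda$ is a contraction for $\rho$ small enough. Given $w_1,w_2\in K$, I would write the inequality satisfied by $u_i=\Lambda w_i$ with test function $u_{3-i}$ and add the two relations; the $j$-terms cancel and I obtain
\begin{equation*}
\|u_1-u_2\|_X^{\,2}\le \bigl(w_1-w_2-\rho(Aw_1-Aw_2),\,u_1-u_2\bigr)_X,
\end{equation*}
so that, by Cauchy--Schwarz, $\|\Lambda w_1-\Lambda w_2\|_X\le \|w_1-w_2-\rho(Aw_1-Aw_2)\|_X$. Expanding the squared norm on the right and using (\ref{A1}) and (\ref{A2}) gives
\begin{equation*}
\|\Lambda w_1-\Lambda w_2\|_X^{\,2}\le \bigl(1-2\rho m_A+\rho^2 L_A^{\,2}\bigr)\|w_1-w_2\|_X^{\,2}.
\end{equation*}
For any $\rho\in \bigl(0,\,2m_A/L_A^{\,2}\bigr)$ the factor on the right is strictly less than $1$, so $\Lambda$ is a strict contraction on the closed subset $K$ of the complete space $X$.

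Finally, the Banach fixed point theorem yields a unique $u\in K$ with $\Lambda u=u$, which is exactly a solution of (\ref{BC21}). For uniqueness in (\ref{BC21}) itself, I would take two solutions $u_1,u_2$, test each with the other, add, cancel the $j$-terms, and invoke strong monotonicity (\ref{A1}) to conclude $u_1=u_2$. The main obstacle, modest in nature, is ensuring that the auxiliary minimization problem is well posed when $j$ is only assumed real-valued and l.s.c.\ on $K$: this is handled by extending $j$ by $+\infty$ outside $K$ and checking that the extension remains proper, convex and l.s.c.\ on $X$, which follows from the closedness of $K$.
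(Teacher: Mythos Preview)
Your proof is correct and follows exactly the approach indicated in the paper: the paper does not prove Theorem~\ref{t0} in detail but states that ``its proof is based on the Banach fixed point argument and could be found in \cite{SM}, for instance,'' which is precisely the route you take (auxiliary proximal-type problem with step-size $\rho$, contraction estimate via (\ref{A1})--(\ref{A2}), then Banach's theorem). Your handling of the subtleties---the affine minorant of the l.s.c.\ convex extension ensuring coercivity of $\Phi$, and the choice $\rho\in(0,2m_A/L_A^2)$---is standard and sound.
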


Theorem \ref{t0} will be used in Section \ref{s3} to prove the unique
solvability of our nonlinear inclusion. Its proof is based on the Banach fixed point argument and could be found in \cite{SM}, for instance.

\medskip\noindent
{\bf History and almost history-dependent operators.}  
Everywhere below $I$ will denote either a bounded interval of the form  $[0,T]$ with $T>0$, or the unbounded interval $\mathbb{R}_+=[0,+\infty)$. For a normed space $(Y,\|\cdot\|_Y)$ we denote by $C(I;Y)$ the space of continuous
functions defined on $I$ with values in $Y$, that is, 
\[
C(I;Y)=\{\ v\colon I\to Y \mid v{\rm \ is\ continuous\ }\}.
\]
The case $I=[0,T]$ leads to the space 
$C([0,T];Y)$  which is a normed space equipped with the norm
\[ \|v\|_{C([0,T];Y)} =
\max_{t\in [0,T]}\,\|v(t)\|_Y. \]
If $Y$ is a Banach space, then $C([0,T];Y)$  is a Banach space, too. 
The case $I=\mathbb{R}_+$ leads to the space
$C(\mathbb{R}_+;Y)$. If $Y$ is a Banach space then $C(\mathbb{R}_+;Y)$ can be organized in a canonical way as 
a Fr\'echet space, i.e., a complete metric space in which the
corresponding topology is induced by a countable family of
seminorms. For a subset $K \subset Y$ we still use the symbol $C(I;K)$
for the set of continuous functions defined on 
$I$ with values on $K$.


We also denote by $C^1(I;Y)$ the space of continuously
differentiable functions on $I$ with values in $Y$ and,
we note that $v\in C^1(I;Y)$ if and only if $v\in C(I;Y)$
and $\dot v\in C(I;Y)$ where, here and below, $\dot v$ represents the derivative of the function $v$. Moreover, for a subset $K\subset Y$,
we denote by $C^1(I;K)$ the set of continuously
differentiable functions on $I$ with values in $K$.
For a function $v\in C^1(I;Y)$, the equality below
will be used in various places of this manuscript:
\begin{equation*}
v(t)=\int_0^t\dot{v}(s)\,ds+v(0)
\ \ \mbox{for all} \ \ t\in I.
\end{equation*}

Two important classes of operators defined on the space of continuous functions are provided by the following definition.

\begin{definition} \label{d}
Assume that	$(Y,\|\cdot\|_Y)$ and $(Z,\|\cdot\|_Z)$ are normed spaces. An operator $\cS \colon C(I;Y)\to C(I;Z)$ is called:
	
\medskip	
{\rm a) history-dependent (h.d.)},
	if  for any compact set $\mathcal{J}\subset I$, there exists $L_\mathcal{J}^\mathcal{S}>0$ such that
	\begin{eqnarray}
	&&\label{2.1}\|\cS u_1(t)-\cS u_2(t)\|_Z\le
	L_\mathcal{J}^\mathcal{S}\,\displaystyle\int_0^t
	\|u_1(s)-u_2(s)\|_Y\,ds\\[0mm]
	&&\qquad\mbox{\rm for all} \ \ u_1,\,u_2\in
	C(I;Y),\ \ t\in \mathcal{J}.\nonumber
	\end{eqnarray} 
	
{\rm b) almost his\-to\-ry-de\-pen\-dent (a.h.d.)},
if for any compact set $\mathcal{J}\subset I$, there exists $l_\mathcal{J}^\mathcal{S}\in[0,1)$ and $L_\mathcal{J}^\mathcal{S}>0$ such that
\begin{eqnarray}
&&\label{2.2}\|\cS u_1(t)-\cS u_2(t)\|_Z\le
l_\mathcal{J}^\mathcal{S}\,\|u_1(t)-u_2(t)\|_Y\\[0mm]
&&\quad+L_\mathcal{J}^\mathcal{S}\,\displaystyle\int_0^t	\|u_1(s)-u_2(s)\|_Y\,ds\ \
\mbox{\rm for all} \ \ u_1,\,u_2\in
C(I;Y),\ \ t\in \mathcal{J}.\nonumber
\end{eqnarray}

\end{definition}

\medskip
Note that here and below, when no confusion arises, we use the shorthand notation $\cS u(t)$ to
represent the value of the function $\cS u$ at the point $t$,
i.e., $\cS u(t)=(\cS u)(t)$.  It follows from the previous definition that any h.d. operator is an a.h.d. operator.
History-dependent and almost history-dependent operators arise in Contact Mechanics and Nonlinear Analysis. They have important fixed point properties
which are very useful to prove the solvability of va\-rio\-us classes of nonlinear equations and variational inequalities.

\begin{theorem} \label{t00}
	Let $Y$ be a Banach space and 
	let $\Lambda\colon C(I;Y)\to C(I;Y)$ be an almost history-dependent operator.
	Then, $\Lambda$ has a unique fixed point, i.e., there exists a unique element $\eta^*\in C(I;Y)$
	such that $\Lambda\eta^*=\eta^*$.
\end{theorem}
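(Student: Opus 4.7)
The plan is to apply the Banach contraction mapping principle on $C(I;Y)$ endowed with an equivalent norm, treating first the compact case $I=[0,T]$ and then extending to $I=\RR_+$ by a localization and gluing argument.

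For $I=[0,T]$, the standard trick is to introduce the Bielecki-type weighted norm $\|v\|_\beta=\max_{t\in[0,T]}e^{-\beta t}\|v(t)\|_Y$, where $\beta>0$ is a parameter yet to be fixed. This norm is equivalent to the usual supremum norm, so $(C([0,T];Y),\|\cdot\|_\beta)$ is a Banach space. Writing $l=l_{[0,T]}^\Lambda\in[0,1)$ and $L=L_{[0,T]}^\Lambda$, estimate (\ref{2.2}) yields
\[
e^{-\beta t}\|\Lambda u_1(t)-\Lambda u_2(t)\|_Y\le l\,e^{-\beta t}\|u_1(t)-u_2(t)\|_Y+L\int_0^t e^{-\beta(t-s)}\,e^{-\beta s}\|u_1(s)-u_2(s)\|_Y\,ds\le\Bigl(l+\frac{L}{\beta}\Bigr)\|u_1-u_2\|_\beta.
\]
Taking the maximum over $t\in[0,T]$ and choosing $\beta$ so large that $l+L/\beta<1$, which is possible because $l<1$, turns $\Lambda$ into a strict contraction in $\|\cdot\|_\beta$. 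The Banach fixed point theorem then delivers a unique $\eta^*\in C([0,T];Y)$ with $\Lambda\eta^*=\eta^*$.

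For $I=\RR_+$, the preliminary observation I would isolate is a built-in causality of (\ref{2.2}): for any $t\ge 0$, applied with $\mathcal{J}=[0,t]$, it forces $\Lambda u_1(t)=\Lambda u_2(t)$ whenever $u_1\equiv u_2$ on $[0,t]$. Consequently, for every integer $n\ge 1$, the assignment sending $\tilde u\in C([0,n];Y)$ to $(\Lambda u)|_{[0,n]}$, where $u\in C(\RR_+;Y)$ is any continuous extension of $\tilde u$ (for instance the one equal to $\tilde u(n)$ on $[n,+\infty)$), is independent of the extension and defines an a.h.d.\ operator $\Lambda_n\colon C([0,n];Y)\to C([0,n];Y)$, with constants inherited from those corresponding to $\mathcal{J}=[0,n]$. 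By the compact case, each $\Lambda_n$ has a unique fixed point $\eta_n^*$. Uniqueness combined with causality forces $\eta_{n+1}^*|_{[0,n]}=\eta_n^*$, so the family $(\eta_n^*)$ glues into a well defined $\eta^*\in C(\RR_+;Y)$ satisfying $\Lambda\eta^*=\eta^*$; uniqueness on $\RR_+$ follows from uniqueness on every $[0,n]$.

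The proof is essentially bookkeeping, but two points deserve care and form the main obstacle. First, because of the pointwise term $l\|u_1(t)-u_2(t)\|_Y$ in (\ref{2.2}) with $l$ only assumed to be strictly less than $1$, an ordinary sup-norm contraction argument fails and the Bielecki weight is needed precisely to absorb this term into a constant $l+L/\beta<1$. Second, when $I=\RR_+$ the target space is only a Fr\'echet space, so one cannot apply Banach's theorem directly; isolating the causality property of a.h.d.\ operators is what makes the restriction--gluing reduction to the compact case rigorous.
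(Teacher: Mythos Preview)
Your argument is correct. Note that the paper does not actually prove this theorem: immediately after the statement it writes ``A proof of Theorem~\ref{t00} can be found in \cite[p.~41--45]{SMBOOK},'' so there is no in-paper proof to compare against. Your approach---the Bielecki weighted norm to force a contraction on $C([0,T];Y)$, followed by the causality/restriction/gluing reduction for $I=\RR_+$---is a standard and complete route to the result, and your two remarks at the end correctly identify the only two places where care is needed.
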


A proof of Theorem \ref{t00} can be found  in \cite[p. 41--45]{SMBOOK}. There, the main properties of history-dependent and almost history-dependent operators are stated and proved, together with various examples and applications.

\medskip\noindent
{\bf Function spaces.}  Let $d\in\{1,2,3\}$ and denote by $\mathbb{S}^d$  the space of second order symmetric
tensors on $\mathbb{R}^d$ or, equivalently, the space of symmetric
matrices of order $d$. The zero element of the spaces $\mathbb{R}^d$ and $\mathbb{S}^d$ will be denoted by $\bzero$. The  inner product and norm on
$\mathbb{R}^d$ and $\mathbb{S}^d$ are defined by
\begin{eqnarray*}
	&&\bu\cdot \bv=u_i v_i\ ,\qquad\ \
	\|\bv\|=(\bv\cdot\bv)^{\frac{1}{2}}\qquad\,
	\forall \,\bu=(u_i),\, \bv=(v_i)\in \mathbb{R}^d,\\[0mm]
	&&\bsigma\cdot \btau=\sigma_{ij}\tau_{ij}\ ,\qquad
	\|\btau\|=(\btau\cdot\btau)^{\frac{1}{2}} \qquad \forall\,
	\bsigma=(\sigma_{ij}),\,\btau=(\tau_{ij})\in\mathbb{S}^d,
\end{eqnarray*}
where the indices $i$, $j$ run between $1$ and $d$ and,
unless stated otherwise, the summation convention over repeated
indices is used.

Consider now a
bounded domain $\Omega\subset\mathbb{R}^d$ with a
Lipschitz continuous boundary $\Gamma$ and let  $\Gamma_1$ be a
measurable part of $\Gamma$ such that ${ meas}\,(\Gamma_1)>0$. 
In Sections \ref{s5} and \ref{s6} of this paper we use the
standard notation for Sobolev and Lebesgue spaces associated to a
 bounded domain $\Omega\subset\mathbb{R}^d$ ($d=1,2,3$), with a
 Lipschitz continuous boundary $\Gamma$. In particular, we use the spaces  $L^2(\Omega)^d$, $L^2(\Gamma_2)^d$,
$L^2(\Gamma_3)$, $L^2(\Gamma_3)^d$  and $H^1(\Omega)^d$, endowed with their canonical inner products and associated norms.
Moreover, for an element $\bv\in H^1(\Omega)^d$ we usually  write $\bv$ for the trace $\gamma\bv\in L^2(\Gamma)^d$ of
$\bv$ to $\Gamma$. In addition, we consider the following
spaces:
\begin{eqnarray*}
	&&V=\{\,\bv\in H^1(\Omega)^d:\  \bv =\bzero\ \ {\rm on\ \ }\Gamma_1\,\},\\
	&&Q=\{\,\bsigma=(\sigma_{ij}):\ \sigma_{ij}=\sigma_{ji} \in L^{2}(\Omega)\,\}.
\end{eqnarray*}

The spaces $V$ and $Q$ are real Hilbert spaces
endowed with the canonical inner products given by
\begin{equation}
(\bu,\bv)_V= \int_{\Omega}
\bvarepsilon(\bu)\cdot\bvarepsilon(\bv)\,dx,\qquad
( \bsigma,\btau )_Q =
\int_{\Omega}{\bsigma\cdot\btau\,dx}.
\end{equation}
Here and below $\bvarepsilon$   
represents the deformation  operator,
that is
\[
\bvarepsilon(\bu)=(\varepsilon_{ij}(\bu)),\quad
\varepsilon_{ij}(\bu)=\frac{1}{2}\,(u_{i,j}+u_{j,i}),\]
the index that follows a comma denoting the
partial derivative with respect to the corresponding component of
the spatial variable $\bx$, i.e.,\ $u_{i,j}={\partial
	u_i}/{\partial x_j}$. 
The associated norms on these spaces  are denoted by
$\|\cdot\|_V$ and $\|\cdot\|_{Q}$,
respectively. Recall that the completeness of the space $V$ follows from the assumption
${ meas}\,(\Gamma_1)>0$ which allows the use of Korn's inequality.
Let  ${\bnu}=(\nu_i)$ be the outward unit normal at $\Gamma$. 
For any element $\bv\in V$,  we denote by $v_\nu$ and $\bv_\tau$ its normal and
tangential components on $\Gamma$ given by
$v_\nu=\bv\cdot\bnu$ and $\bv_\tau=\bv-v_\nu\bnu$, respectively. 
In addition, we recall that the Sobolev
trace theorem yields
\begin{equation}\label{trace}
\|\bv\|_{L^2(\Gamma_3)^d}\le c_0\,\|\bv\|_{V}\quad
{\rm for\ all}\ \bv \in V,
\end{equation}
$c_0$ being a positive constant which depends on $\Omega$, $\Gamma_1$ and $\Gamma_3$.

Next, for a
regular stress function $\bsigma:\Omega\to \mathbb{S}^d$,  the following Green's formula
holds:
\begin{equation}
\int_\Omega\,\bsigma\cdot\bvarepsilon(\bv)\,dx+\int_\Omega\,{\rm
	Div}\,\bsigma\cdot\bv\,dx = \int_\Gamma\bsigma\bnu \cdot\bv\,da
\quad {\rm\ for\ all }\ \bv\in H^1(\Omega)^d. \label{Green}
\end{equation}
Here and below in this paper ${\rm Div}$ denotes the divergence operator, i.e.,
${\rm	Div}\,\bsigma=(\sigma_{ij,j})$.

Finally, we introduce the space of fourth order tensors defined by 
\begin{equation}\label{Qi} {\bf Q_\infty}=\{\, {\cal E}=(e_{ijkl})\mid 
{e}_{ijkl}={e}_{jikl}={e}_{klij} \in L^\infty(\Omega),\ 1\le
i,j,k,l\le d\,\}\,.
\end{equation}
It is a Banach space endowed with the norm 
\begin{equation*}\label{**}
\displaystyle \|{\cal{E}}\|_{\bf Q_{\infty}}=\max_{0\le i,j,k,l\le
	d}\|{e}_ {ijkl}\|_{L^{\infty}(\Omega)}. \end{equation*} \noindent
Moreover it is easy to see that
\begin{equation}\label{pmp}
\|{\cal{E}}\btau\|_{Q}\le d\,\|{\cal{E}}\|_{\bf Q_{\infty}}
\|\btau\|_Q\ \ \mbox{for all} \ \ 
{\cal{E}}\in{\bf Q_{\infty}},\ \btau\in Q.
\end{equation}
This inequality will be repeatedly used in Sections \ref{s5} and \ref{s6} to provide the history-dependent feature of the relaxation tensors.

\section{Time-dependent inclusions}\label{s3}
In this section we state and prove existence and uniqueness results for time-dependent inclusions with nonlinear operators and, in particular, with history-dependent operators. The functional framework is the following: besides the Hilbert space $X$ we consider a real Hilbert space $Y$  endowed with the  inner  product $(\cdot,\cdot)_Y$
and the associated norm   $\|\cdot\|_Y$. We denote by $Y\times X$ the product space of $Y$ and $X$, endowed with the inner product  product $(\cdot,\cdot)_{Y\times X}$
and the associated norm   $\|\cdot\|_{Y\times X}$. Moreover, we assume the following.

\bigskip
\noindent 
$(K)$\qquad	  $K\subset X$ is a nonempty closed convex cone (and, therefore, $0_X\in K$).

\bigskip
\noindent 
$(A)$\qquad	   $\left\{\begin{array}{l} A:X\to X\ \ \mbox{ is a strongly monotone Lipschitz continuous operator,}\\
\mbox{  i.e., it satisfies conditions}\ 
(\ref{A1})\ \mbox{and}\ (\ref{A2})\ \mbox{ with}\  m_A>0\ \mbox{and}\ L_A>0,\\ \mbox{\ respectively}.
\end{array} \right.$

\bigskip
\noindent 
$(\cR)$\qquad\ $\left\{\begin{array}{l}
\cR:C(I;X)\to C(I;Y)\ \ \mbox{and for any compact set}\\[0mm]	
 \mathcal{J}\subset I,\ \mbox{there exists}\ l_\mathcal{J}^\mathcal{R}>0\ \mbox{and}\ L_\mathcal{J}^\mathcal{R}>0\   \mbox{such that}\\[3mm]
\ \|\cR u_1(t)-\cR u_2(t)\|_Y\le
l_\mathcal{J}^\mathcal{R}\,\|u_1(t)-u_2(t)\|_X\\[2mm]
\quad+L_\mathcal{J}^\mathcal{R}\,\displaystyle\int_0^t	\|u_1(s)-u_2(s)\|_X\,ds\ \
\mbox{\rm for all} \ \ u_1,\,u_2\in
C(I;X),\ \ t\in \mathcal{J}.\nonumber
\end{array} \right.$

\bigskip
\noindent 
$(\cS)$\qquad\ $\left\{\begin{array}{l}
\cS:C(I;X)\to C(I;X)\ \ \mbox{and for any compact set}\\[0mm]	
\mathcal{J}\subset I,\ \mbox{there exists}\ l_\mathcal{J}^\mathcal{S}>0\ \mbox{and}\ L_\mathcal{J}^\mathcal{S}>0\   \mbox{such that}\\[3mm]
\ \|\cS u_1(t)-\cS u_2(t)\|_X\le
l_\mathcal{J}^\mathcal{S}\,\|u_1(t)-u_2(t)\|_X\\[2mm]
\quad+L_\mathcal{J}^\mathcal{S}\,\displaystyle\int_0^t	\|u_1(s)-u_2(s)\|_X\,ds\ \
\mbox{\rm for all} \ \ u_1,\,u_2\in
C(I;X),\ \ t\in \mathcal{J}.\nonumber
\end{array} \right.$

\bigskip
\noindent 
$(j)$\qquad\ $\left\{\begin{array}{l}
j:Y\times K\to\R \mbox{ is such that}\\[2mm]	
{\rm (a)}\ \ j(\eta,\cdot):K\to\R\ \mbox{is a convex, positively homogenous}\\ \quad\quad\mbox{Lipschitz continuous function, for any}\ \eta\in Y.\\[2mm]
{\rm (b)}\  \ \mbox{There exists} \ \alpha_j\ge 0 \ \mbox{such that}\\
\qquad j(\eta_1,v_2)- j(\eta_1,v_1)+j(\eta_2,v_1) - j(\eta_2,v_2)\le\alpha_j\|\eta_1-\eta_2\|_{Y}\|v_1-v_2\|_{X}\\
\qquad\quad \mbox{for all}\ \eta_1,\ \eta_2\in Y,\ v_1,\,v_2\in K.	
\end{array} \right.$

\bigskip
\noindent 
$(f)$\qquad\ $f\in C(I;X)$. 

\medskip
Examples of operators ${\cal R}$, ${\cal S}$ and functions $j$ which satisfy conditions
 $({\cal R})$, $({\cal S})$ and $(j)$, respectively, will be provided in Sections \ref{s5} and \ref{s6}, in the study of several models of contact. We also mention that a history-dependent operator satisfies conditions $({\cal R})$ (or, equivalently, condition
  $({\cal S})$) and, therefore, additional examples are provided in
  \cite[pages 36--37, 39]{SMBOOK}. Nevertheless, for the convenience of the reader, we present here the following examples.

  \begin{example}\label{2ex4n}
  	Consider the operator $\cR\colon C(I;X)\to C(I;X)$ defined by
  	\begin{equation*}
  	\cR u(t)=e^tu(t)+\int_0^t s\,u(s)\,ds
  	\ \ \mbox{\rm for all} \ \ u\in C(I;X),\ t\in I.
  	\end{equation*}
  	Then it is easy to see that ${\cal R}$ satisfies condition  $({\cal R})$  with  \[l_\mathcal{J}^\mathcal{R}=\displaystyle \max_{t\in\mathcal J}\,e^t\qquad{\rm  and}\qquad L_\mathcal{J}^\mathcal{R}= \displaystyle\max_{t\in\mathcal J}\,t.\]
  In addition, note that ${\cal R}$ is not an almost history-dependent  operator.
  \end{example}

\begin{example} Let $j:Y\times K\to\R$ be the function defined by  $j(\eta,v)=p(\eta)q(v)$, where $p:Y\to\R$ and $q:K\to\R$. Assume that  $p$ is a Lipschitz continuous function with Lipschitz constant  $L_1$ and $q$  is a convex positively homogeneous  Lipschitz continuous function with Lipschitz constant $L_2$. Then, is easy to see that $j$ satisfies condition  $(j)$ with $\alpha_j=L_1 L_2$.

\end{example}

We now extend the function 	 $j$ from $Y\times K$ to the whole product space  $Y\times X$ by introducing the function $J:Y\times X\to(-\infty,+\infty]$ defined by
\begin{equation}\label{Jn}
J(\eta, v)=\left\{\begin{array}{l}
j(\eta,v)\qquad\mbox{if}\quad v\in K,\\[2mm]	
+\infty	\qquad\quad{\rm if}\quad v\notin K
\end{array} \right.\qquad\forall\, \eta\in Y.
\end{equation}
Using assumptions $(K)$ and $(j)$ it is easy to see that for any $\eta\in Y$, $J(\eta,\cdot)$ is proper,  positively homogenous, convex, lower semicontinuous and,
moreover, $J(\eta,0_X)=0$.	 Denote by $C(\eta)$ the subdifferential of $J(\eta,\cdot)$ in  $0_X$, i.e.,
\begin{equation}\label{Cn}
C(\eta)=\partial J(\eta,0_X)=
\{ \, \xi \in X \ :\  J(\eta,v)\ge (\xi,v)_X 
\ \ \forall\, v \in X \, \}\ 
\end{equation}
and, for any $t\in I$, let
\begin{equation}\label{Ct}
C(\eta,t)=f(t)-C(\eta).
\end{equation}
Note that, using assumptions $(K)$, $(j)$ and $(f)$ it follows that for any $\eta\in X$ and $t\in I$ the set 
$C(\eta,t)$ is a nonempty closed convex subset of $X$.

With these notation, the inclusion problem we consider in this section is the following.

\begin{problem}\label{p1}Find a function $u:I\to X$ such that
\begin{equation}\label{i}
	-u(t)\in {\rm N}_{C(\mathcal{R} u(t),t)}(Au(t)+\cS u(t))\qquad\forall\,t\in I.
	\end{equation}
\end{problem}

In the study of Problem \ref{p1} we have the following existence and uniqueness result. 

\medskip
\begin{theorem}\label{t1} 
Assume $(K)$--$(f)$ and, moreover, assume that for any compact set $\mathcal{J}\subset I$ the following smallness assumption holds:
\begin{equation}\label{smal}
(\alpha_j+1)(l_\mathcal{J}^\mathcal{R}+l_\mathcal{J}^\mathcal{S})<m_A.
\end{equation}Then, Problem $\ref{p1}$ has a unique solution with regularity $u\in C(I;K)$.	
\end{theorem}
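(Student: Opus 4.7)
My plan is to reformulate the inclusion as an equivalent parametric elliptic variational inequality, solve that inequality pointwise in $t$ via Theorem~\ref{t0}, and then close the loop on the nonlocal data $\mathcal{R}u$ and $\mathcal{S}u$ by a fixed point argument based on Theorem~\ref{t00}. The first step is to prove that Problem~\ref{p1} is equivalent to finding $u\in C(I;K)$ such that, for every $t\in I$ and every $v\in K$,
\begin{equation*}
(Au(t),v-u(t))_X+j(\mathcal{R}u(t),v)-j(\mathcal{R}u(t),u(t))\ge (f(t)-\mathcal{S}u(t),v-u(t))_X.
\end{equation*}
Setting $\eta=\mathcal{R}u(t)$ and $w=Au(t)+\mathcal{S}u(t)$, the shift and reflection identities (\ref{n1})--(\ref{n2}) convert $-u(t)\in {\rm N}_{f(t)-C(\eta)}(w)$ into $u(t)\in {\rm N}_{C(\eta)}(f(t)-w)$. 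Because $J(\eta,\cdot)$ is proper, convex, lower semicontinuous, positively homogeneous and vanishes at $0_X$, it coincides with the support function of $C(\eta)=\partial J(\eta,\cdot)(0_X)$, so $(J(\eta,\cdot))^*={\rm I}_{C(\eta)}$ and hence ${\rm N}_{C(\eta)}=\partial(J(\eta,\cdot))^*$. The Legendre-Fenchel equivalence (\ref{e}) then rewrites $u(t)\in \partial(J(\eta,\cdot))^*(f(t)-w)$ as $f(t)-w\in \partial J(\eta,\cdot)(u(t))$, and combining this with the subdifferential inequality (\ref{sub}) and the penalty $J(\eta,v)=+\infty$ on $X\setminus K$ produces the variational inequality above.

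With the reformulation in hand, I would freeze the nonlocal coupling: for arbitrary $(\eta,\zeta)\in C(I;Y\times X)$ and each $t\in I$, Theorem~\ref{t0} applied with data $A$, $j(\eta(t),\cdot)$ and $f(t)-\zeta(t)$ delivers a unique $u_{\eta,\zeta}(t)\in K$ solving the above variational inequality with $\mathcal{R}u(t)$ and $\mathcal{S}u(t)$ replaced by $\eta(t)$ and $\zeta(t)$. Standard continuity estimates based on (\ref{A1}) and $(j)$(b) show that $u_{\eta,\zeta}\in C(I;K)$, and I would define the operator
\begin{equation*}
\Lambda:C(I;Y\times X)\to C(I;Y\times X),\qquad \Lambda(\eta,\zeta)=(\mathcal{R}u_{\eta,\zeta},\,\mathcal{S}u_{\eta,\zeta}),
\end{equation*}
whose fixed points correspond exactly to the solutions of Problem~\ref{p1}.

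The key estimate is the almost history-dependence of $\Lambda$. Taking $v=u_2$ in the variational inequality for $u_1=u_{\eta_1,\zeta_1}(t)$ and $v=u_1$ in the variational inequality for $u_2=u_{\eta_2,\zeta_2}(t)$, adding the two inequalities, and exploiting (\ref{A1}) together with $(j)$(b), one obtains
\begin{equation*}
m_A\|u_1-u_2\|_X\le \alpha_j\|\eta_1(t)-\eta_2(t)\|_Y+\|\zeta_1(t)-\zeta_2(t)\|_X.
\end{equation*}
Endowing $Y\times X$ with the norm $\|(\eta,\zeta)\|_{Y\times X}=\|\eta\|_Y+\|\zeta\|_X$ and plugging this pointwise bound into the Lipschitz inequalities in $(\mathcal{R})$ and $(\mathcal{S})$ yields, on every compact $\mathcal{J}\subset I$, an estimate of the exact form (\ref{2.2}) for $\Lambda$ with leading constant $(\alpha_j+1)(l_\mathcal{J}^\mathcal{R}+l_\mathcal{J}^\mathcal{S})/m_A$. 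The smallness assumption (\ref{smal}) forces this constant to be strictly less than $1$, so $\Lambda$ is almost history-dependent.

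Theorem~\ref{t00} then provides a unique fixed point $(\eta^*,\zeta^*)$ of $\Lambda$, and $u:=u_{\eta^*,\zeta^*}\in C(I;K)$ is the required solution of Problem~\ref{p1}; uniqueness of $u$ follows from the same chain of estimates together with a Gronwall-type argument. The hard part will be the reformulation step: one must carefully track the shifts and reflections of the normal cone via (\ref{n1})--(\ref{n2}), exploit positive homogeneity of $j(\eta,\cdot)$ to identify $J(\eta,\cdot)$ with the support function of $C(\eta)$, and invoke the Legendre-Fenchel duality (\ref{e}) to swap a normal-cone inclusion for a subdifferential inclusion. Once that equivalence is secured, the remaining steps are a clean application of the variational inequality and almost-history-dependent-operator machinery collected in Section~\ref{s2}.
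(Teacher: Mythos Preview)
Your proposal is correct and follows essentially the same route as the paper: the paper packages your reformulation step as Lemma~\ref{l0}, your frozen-problem solvability and pointwise Lipschitz estimate as Lemma~\ref{l1}, and your fixed-point step for $\Lambda(\eta,\zeta)=(\mathcal{R}u_{\eta,\zeta},\mathcal{S}u_{\eta,\zeta})$ as Lemma~\ref{l2}, with the same leading constant $(\alpha_j+1)(l_\mathcal{J}^\mathcal{R}+l_\mathcal{J}^\mathcal{S})/m_A$. The only cosmetic difference is that the paper deduces uniqueness directly from the uniqueness of the fixed point of $\Lambda$ rather than invoking a separate Gronwall argument.
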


 Before providing the proof of Theorem \ref{t1} we  start with a preliminary result which will repeatedly used in Sections \ref{s5} and \ref{s6} of this paper.
 
 \begin{lemma}\label{l0}
 	Let $X$, $Y$ be Hilbert spaces and assume  that $(K)$ and $(j)(a)$ hold.
 	Moreover, let $f:I\to X$, $\eta\in Y$, $u,\, z\in X$, $t\in I$ and let $J$, $C(\eta)$, $C(\eta,t)$ be given by $(\ref{Jn})$, $(\ref{Cn})$ and $(\ref{Ct})$, respectively. Then, the following equivalence holds:
 	\begin{equation}\label{Cm}
 u\in K,\ \	j(\eta,v)-j(\eta,u)\ge (f(t)-z,v-u)_X\ \ \forall\, v\in K\ \Longleftrightarrow\ -u\in {\rm N}_{C(\eta,t)}(z).
 	\end{equation}
 \end{lemma}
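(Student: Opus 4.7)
My plan is to move through three reformulations, each one a small, well-known identity from convex analysis, with the positive homogeneity of $J(\eta,\cdot)$ doing the essential work in the middle step.

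\textbf{Step 1: Rewrite the variational inequality as a subdifferential inclusion.} By the definition \eqref{Jn} of $J$, the function $J(\eta,\cdot)$ agrees with $j(\eta,\cdot)$ on $K$ and equals $+\infty$ off $K$. Hence the inequality
$j(\eta,v)-j(\eta,u)\ge (f(t)-z,v-u)_X$ for every $v\in K$, together with $u\in K$, is equivalent to
\[
J(\eta,v)-J(\eta,u)\ge (f(t)-z,v-u)_X \quad \forall\, v\in X,
\]
which by \eqref{sub} is exactly $f(t)-z\in \partial J(\eta,\cdot)(u)$.

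\textbf{Step 2: Dualise via Fenchel conjugation.} I will compute the Fenchel conjugate of $J(\eta,\cdot)$ in the second argument. Because $J(\eta,\cdot)$ is proper, convex, lower semicontinuous, positively homogeneous, and $J(\eta,0_X)=0$, one has $J^*(\eta,\cdot)=\mathrm{I}_{C(\eta)}$: indeed, if $\xi\in C(\eta)=\partial J(\eta,0_X)$ then $(\xi,v)_X\le J(\eta,v)$ for every $v$, so $\sup_v[(\xi,v)_X-J(\eta,v)]=0$ (attained at $v=0_X$); if $\xi\notin C(\eta)$, some $v_0$ gives $(\xi,v_0)_X>J(\eta,v_0)$, and scaling $v_0$ by $\lambda>0$ sends the supremum to $+\infty$ by positive homogeneity. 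Applying the Fenchel equivalence \eqref{e} to $f(t)-z\in\partial J(\eta,\cdot)(u)$ therefore yields $u\in\partial J^*(\eta,\cdot)(f(t)-z)=\mathrm{N}_{C(\eta)}(f(t)-z)$.

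\textbf{Step 3: Translate to a normal cone at $z$.} It remains to show $u\in \mathrm{N}_{C(\eta)}(f(t)-z)\Longleftrightarrow -u\in\mathrm{N}_{C(\eta,t)}(z)$, using the reduction rules \eqref{n1} and \eqref{n2}. Writing $C(\eta,t)=f(t)-C(\eta)=-C(\eta)+f(t)$, the translation rule \eqref{n2} gives
\[
\mathrm{N}_{C(\eta,t)}(z)=\mathrm{N}_{-C(\eta)}(z-f(t)),
\]
and the reflection rule \eqref{n1} (applied to $C=C(\eta)$ with $u=f(t)-z$) gives
$\mathrm{N}_{-C(\eta)}(z-f(t))=-\mathrm{N}_{C(\eta)}(f(t)-z).$
Combining, $-u\in\mathrm{N}_{C(\eta,t)}(z)$ is equivalent to $u\in\mathrm{N}_{C(\eta)}(f(t)-z)$, which closes the chain of equivalences.

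The only nontrivial ingredient is the computation $J^*(\eta,\cdot)=\mathrm{I}_{C(\eta)}$ in Step~2; everything else is a direct reading of the definitions and the two normal-cone identities. No additional regularity on $j$ beyond $(j)(a)$ is needed, since positive homogeneity of $j(\eta,\cdot)$ on the cone $K$ and the convention $j(\eta,0_X)=0$ transfer to $J(\eta,\cdot)$ via $(K)$.
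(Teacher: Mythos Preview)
Your proof is correct and follows essentially the same route as the paper: reformulate the variational inequality as $f(t)-z\in\partial J(\eta,\cdot)(u)$, pass to the conjugate via \eqref{e} to obtain $u\in \mathrm{N}_{C(\eta)}(f(t)-z)$, and then use \eqref{n1}--\eqref{n2} to translate the normal cone. The only cosmetic difference is that the paper identifies $J(\eta,\cdot)$ as the support function $\mathrm{I}^*_{C(\eta)}$ and invokes biconjugation, whereas you compute $J^*(\eta,\cdot)=\mathrm{I}_{C(\eta)}$ directly from positive homogeneity; both arguments are equivalent.
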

 
 \begin{proof}
 Using (\ref{Jn}) and the definition of the subdifferential have the equivalences
 \begin{eqnarray*}
 && u\in K,\quad j(\eta,v)-j(\eta,u)\ge (f(t)-z,v-u)_X\quad\forall\, v\in K\\ [2mm]
 &&\Longleftrightarrow\ J(\eta,v)-J(\eta,u)\ge (f(t)-z,v-u)_X\quad \forall\, v\in X \\ [2mm] &&\Longleftrightarrow\ f(t)-z\in \partial J(\eta,u)
 \end{eqnarray*}
 and, therefore, (\ref{e}) yields
 \begin{eqnarray}
 &&\label{e0}
 u\in K,\quad j(\eta,v)-j(\eta,u)\ge (f(t)-z,v-u)_X\quad\forall\, v\in K\ \\ [2mm] &&\Longleftrightarrow\ u\in \partial J^*(\eta,f(t)-z).\nonumber
 \end{eqnarray}
 
 On the other hand, assumption $(j)$ guarantees that $J(\eta,\cdot):X\to(-\infty,+\infty]$ is positively homogenuous with $J(\eta,0_X)=0$ and, therefore,
 $J(\eta,\cdot)=I^*_{C(\eta)}(\cdot)$  which implies that  $J^*(\eta,\cdot)=I^{**}_{C(\eta)}(\cdot)={\rm I}_{C(\eta)}(\cdot)$. It follows from here that  
 $\partial J^*(\eta,\cdot)={\rm N}_{C(\eta)}(\cdot)$. We use this equality to see that
  \begin{equation}\label{e00}
   u\in \partial J^*(\eta,f(t)-z)\ \Longleftrightarrow\ u\in {\rm N}_{C(\eta)}(f(t)-z).
  \end{equation}
 
 Finally, using (\ref{n1}) and  (\ref{n2}) we deduce that
 \begin{equation}\label{e000}
 u\in {\rm N}_{C(\eta)}(f(t)-z)= {\rm N}_{C(\eta)-f(t)}(-z) \ \Longleftrightarrow\ -u\in {\rm N}_{f(t)-C(\eta)}(z)
 \end{equation}
 We now combine the equivalences (\ref{e0})--(\ref{e000}), then we use notation (\ref{Ct}) to deduce that (\ref{Cm}) holds, which concludes the proof.
 \end{proof}

We now return back to the proof of Theorem \ref{t1} which is carried out in several steps that we describe in what follows. To this end, everywhere below we assume that  $(K)$--$(f)$ and (\ref{smal}) hold. The first step of the proof is the following.

\begin{lemma}\label{l1} For any $\theta=(\eta,\xi)\in C(I;Y\times X)$ there exists a unique function $u_\theta\in C(I;K)$ such that
\begin{equation}\label{i1}
-u_\theta(t)\in {\rm N}_{C(\eta(t),t)}(Au_\theta(t)+\xi(t))\qquad\forall\,t\in I.
\end{equation}
Moreover, if $u_i\in C(I;K)$ represents the solution of inclusion $(\ref{i1})$ for $\theta_i=(\xi_i,\eta_i)\in C(I;Y\times X)$, $i=1,2$, then
\begin{equation}\label{e1}
\|u_1(t)-u_2(t)\|_X\le\frac{1}{m_A}(\alpha_j\|\eta_1(t)-\eta_2(t)\|_Y+\|\xi_1(t)-\xi_2(t)\|_X)\quad\forall\,t\in I.
\end{equation}
\end{lemma}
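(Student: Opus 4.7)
The plan is to convert the pointwise inclusion into a pointwise variational inequality via Lemma \ref{l0}, solve it for each fixed $t$ by Theorem \ref{t0}, and then derive the stability estimate \eqref{e1}, which will simultaneously give uniqueness and continuity of $t \mapsto u_\theta(t)$.

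\textbf{Step 1 (reformulation).} Fix $\theta=(\eta,\xi)\in C(I;Y\times X)$ and $t\in I$. Applying Lemma \ref{l0} with $z=Au(t)+\xi(t)$, the inclusion \eqref{i1} is equivalent to the variational inequality
\begin{equation*}
u(t)\in K,\quad (Au(t),v-u(t))_X+j(\eta(t),v)-j(\eta(t),u(t))\ge (f(t)-\xi(t),v-u(t))_X \quad \forall\,v\in K.
\end{equation*}
For each fixed $t$, assumption $(j)(a)$ ensures that $j(\eta(t),\cdot):K\to\mathbb{R}$ is convex and Lipschitz continuous (hence l.s.c.), $A$ is strongly monotone and Lipschitz continuous by $(A)$, and $f(t)-\xi(t)\in X$. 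Theorem \ref{t0} therefore yields a unique $u_\theta(t)\in K$ solving this inequality.

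\textbf{Step 2 (stability estimate).} Write the inequality for $u_1(t)$ and $u_2(t)$ associated with $\theta_1,\theta_2$, take $v=u_2(t)$ and $v=u_1(t)$ respectively, and add. The terms involving $A$ combine into $-(Au_1(t)-Au_2(t),u_1(t)-u_2(t))_X$ and the functional terms combine into the expression controlled by $(j)(b)$. This produces
\begin{equation*}
(Au_1(t)-Au_2(t),u_1(t)-u_2(t))_X \le \alpha_j\|\eta_1(t)-\eta_2(t)\|_Y\|u_1(t)-u_2(t)\|_X+(\xi_1(t)-\xi_2(t),u_1(t)-u_2(t))_X.
\end{equation*}
Using strong monotonicity \eqref{A1} and Cauchy--Schwarz, then dividing by $\|u_1(t)-u_2(t)\|_X$ (the estimate being trivial otherwise), yields \eqref{e1}.

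\textbf{Step 3 (continuity in $t$).} Applying \eqref{e1} with $\theta_1=\theta$ and $\theta_2(s)=(\eta(t),\xi(t))$ evaluated at $s$ shows that $u_\theta(s)\to u_\theta(t)$ as $s\to t$, since $\eta,\xi\in C(I;\cdot)$. Because $K$ is closed, $u_\theta\in C(I;K)$.

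I do not expect any serious obstacle here: once Lemma \ref{l0} is in place the result is a standard parameterized variational inequality argument. The only subtle point is ensuring the right-hand side of the variational inequality is exactly $f(t)-\xi(t)$ (so that the shifted constraint $C(\eta(t),t)=f(t)-C(\eta(t))$ matches the $z$ chosen in the equivalence of Lemma \ref{l0}), and using $(j)(b)$ in its symmetric four-term form to absorb the mixed functional difference into $\alpha_j\|\eta_1-\eta_2\|_Y\|u_1-u_2\|_X$.
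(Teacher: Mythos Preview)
Your approach is essentially the one in the paper: convert via Lemma~\ref{l0} to a parameterized variational inequality, solve pointwise by Theorem~\ref{t0}, and obtain \eqref{e1} by testing each inequality with the other solution and using $(A)$ and $(j)(b)$.

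There is, however, a small gap in Step~3. The estimate \eqref{e1} compares two solutions at the \emph{same} time $t$ (for different $\theta$); it does not control the dependence on $t$ coming from $f(t)$. If you take $\theta_2$ to be the constant $(\eta(t),\xi(t))$ and apply \eqref{e1} at time $s$, you control $\|u_\theta(s)-u_{\theta_2}(s)\|_X$, but $u_{\theta_2}(s)$ is \emph{not} $u_\theta(t)$ unless $f(s)=f(t)$, so the argument is circular. The remedy is to repeat the Step~2 computation with the same $\theta$ at two different times $t_1,t_2$: an additional term $\|f(t_1)-f(t_2)\|_X$ appears on the right-hand side, and continuity then follows from $(f)$. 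This is exactly how the paper proceeds (its inequality (\ref{2w16}) includes the $f$-term); after that, \eqref{e1} is obtained by the same computation with $t$ fixed, where the $f$-term vanishes.
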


\begin{proof} Let $\theta=(\eta,\xi)\in C(I;Y\times X)$. We use Lemma \ref{l0} to see that the time-dependent inclusion (\ref{i1}) is equivalent with the problem of finding a function $u_\theta:I\to X$ such that
\begin{eqnarray}
&&\label{11}\hspace{-15mm}u_\theta(t)\in K,\quad	j(\eta(t),v)-j(\eta(t),u_\theta(t))\ge (f(t)-Au_\theta(t)-\xi(t),v-u_\theta(t))_X\\ [2mm]
&&\hspace{55mm}\forall\, v\in K,\ t\in I.\nonumber
\end{eqnarray}
We claim that this time-dependent variational inequality has a unique solution $u_\theta\in C(I;K)$. To this end we consider an arbitrary element
$t\in I$ be fixed. Then, using assumptions $(K)$, $(A)$, $(j)$
it follows from Theorem~\ref{t0} that there exists a unique element $u_\theta(t)$ which solves
(\ref{11}). Now, let us prove that the map $t\mapsto u_\theta(t)\colon I\to K$ is continuous.
For this, consider $t_1$, $t_2\in I$ and, for the sake of simplicity in writing, denote
$\eta(t_i)=\eta_i$, $\xi(t_i)=\xi_i$, $u_{\theta}(t_i)=u_i$,  $f(t_i)=f_i$
for $i=1$, $2$. Using (\ref{11}) we obtain
\begin{eqnarray}
&&\hspace{-9mm}\label{2wv1}u_1\in K,\quad j(\eta_1,v)-j(\eta_1,u_1)\ge (f_1-Au_1-\xi_1,v-u_1)_X
\quad \forall\, v\in K,\\[2mm]
&&\hspace{-9mm}\label{2wv2}u_2\in K,\quad j(\eta_2,v)-j(\eta_2,u_2)\ge (f_2-Au_2-\xi_2,v-u_2)_X
\quad\forall\, v\in K.
\end{eqnarray}
Taking $v=u_2$ in (\ref{2wv1}), $v=u_1$ in (\ref{2wv2}) and adding the resulting
inequalities yields 
\begin{eqnarray}
&&\label{2w14n}(Au_1-Au_2,u_1-u_2)_X
\\ [1mm]
&&\quad\le j(\eta_1,u_2)-j(\eta_1,u_1)+j(\eta_2,u_1)-j(\eta_2,u_2)\nonumber\\ [1mm]
&&\qquad+(\xi_1-\xi_2,u_1-u_2)_X+
(f_1-f_2,u_1-u_2)_X.\nonumber
\end{eqnarray}
Then, using assumptions $(A)$ and $(j)(b)$, we obtain 
\begin{equation}
\label{2w16}
\hspace{-8mm}m_A\,\|u_1-u_2\|_X\\ \le \alpha_j\|\eta_1-\eta_2\|_Y+
\|\xi_1-\xi_2\|_X+\|f_1-f_2\|_X.
\end{equation}
Inequality (\ref{2w16}) combined with  assumption $(f)$ implies that $t\mapsto u_\theta(t)\colon 
I\rightarrow K$ is a continuous function. This concludes the existence part of the claim. The uniqueness part is a direct consequence of the
uniqueness of the solution $u_\theta(t)$ to the inequality (\ref{11}), at each $t\in I$, guaranteed by Theorem~\ref{t0}. 

Assume now that if $u_i\in C(I;K)$ represents the solution of inequality $(\ref{11})$ for $\theta_i=(\xi_i,\eta_i)\in C(I;Y\times X)$, $i=1,2$. Then, arguments similar to those used in the proof of inequality (\ref{2w16}) show that (\ref{e1}) holds.
Lemma \ref{l1} is now a direct conclusion of the equivalence between inclusion (\ref{i1}) and the inequality (\ref{11}), as already mentioned at the beginning of the proof.
\end{proof}

Next, we consider the operator $\Lambda:C(I;Y\times X)\to C(I;Y\times X)$ defined by
\begin{equation}\label{la}
\Lambda\theta=(\cR u_\theta,\cS u_\theta)\qquad\forall\, \theta\in C(I;Y\times X).
\end{equation}

We have the following result.

\begin{lemma}\label{l2} The operator $\Lambda$ has a unique fixed point $\theta^*=(\eta^*,\xi^*)\in C(I;Y\times X)$.
\end{lemma}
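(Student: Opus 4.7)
The natural strategy is to apply Theorem \ref{t00} to the operator $\Lambda$ on the Banach space $C(I;Y\times X)$. So the plan is to verify that $\Lambda$ is almost history-dependent in the sense of Definition \ref{d}, with the pointwise contraction constant strictly less than $1$ on every compact subset of $I$. The smallness assumption (\ref{smal}) has been tailored precisely for this.

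First I would fix an arbitrary compact set $\mathcal{J}\subset I$ and take $\theta_i=(\eta_i,\xi_i)\in C(I;Y\times X)$ together with the associated solutions $u_i=u_{\theta_i}\in C(I;K)$ of the time-dependent inclusion (\ref{i1}), $i=1,2$. Using assumptions $(\mathcal{R})$ and $(\mathcal{S})$ and the definition (\ref{la}) of $\Lambda$, I obtain for every $t\in\mathcal{J}$
\begin{eqnarray*}
\|\Lambda\theta_1(t)-\Lambda\theta_2(t)\|_{Y\times X}
&\le& \|\mathcal{R}u_1(t)-\mathcal{R}u_2(t)\|_Y+\|\mathcal{S}u_1(t)-\mathcal{S}u_2(t)\|_X\\
&\le& (l_\mathcal{J}^\mathcal{R}+l_\mathcal{J}^\mathcal{S})\,\|u_1(t)-u_2(t)\|_X\\
&&+\,(L_\mathcal{J}^\mathcal{R}+L_\mathcal{J}^\mathcal{S})\int_0^t\|u_1(s)-u_2(s)\|_X\,ds,
\end{eqnarray*}
where for notational convenience I am using the equivalent sum norm on $Y\times X$ (the argument is identical with the Hilbertian norm, up to a harmless constant).

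The crucial step is then to insert the a priori estimate (\ref{e1}) from Lemma \ref{l1}. Since
\[
\alpha_j\|\eta_1(t)-\eta_2(t)\|_Y+\|\xi_1(t)-\xi_2(t)\|_X\le (\alpha_j+1)\,\|\theta_1(t)-\theta_2(t)\|_{Y\times X},
\]
estimate (\ref{e1}) yields
\[
\|u_1(t)-u_2(t)\|_X\le\frac{\alpha_j+1}{m_A}\,\|\theta_1(t)-\theta_2(t)\|_{Y\times X}\qquad\forall\,t\in I,
\]
and the same bound holds pointwise inside the integral on the right-hand side above. Substituting these two bounds gives
\[
\|\Lambda\theta_1(t)-\Lambda\theta_2(t)\|_{Y\times X}\le l_\mathcal{J}^\Lambda\,\|\theta_1(t)-\theta_2(t)\|_{Y\times X}+L_\mathcal{J}^\Lambda\int_0^t\|\theta_1(s)-\theta_2(s)\|_{Y\times X}\,ds
\]
with
\[
l_\mathcal{J}^\Lambda=\frac{(\alpha_j+1)(l_\mathcal{J}^\mathcal{R}+l_\mathcal{J}^\mathcal{S})}{m_A},\qquad L_\mathcal{J}^\Lambda=\frac{(\alpha_j+1)(L_\mathcal{J}^\mathcal{R}+L_\mathcal{J}^\mathcal{S})}{m_A}.
\]
The smallness condition (\ref{smal}) is exactly what ensures $l_\mathcal{J}^\Lambda<1$, so $\Lambda$ is an almost history-dependent operator on $C(I;Y\times X)$ in the sense of Definition \ref{d}(b). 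Theorem \ref{t00}, applied to the Banach space $Y\times X$, then delivers a unique fixed point $\theta^*=(\eta^*,\xi^*)\in C(I;Y\times X)$, which finishes the proof.

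No serious obstacle is anticipated: the whole argument is essentially bookkeeping of Lipschitz constants, and the only place where the smallness assumption (\ref{smal}) really bites is in forcing $l_\mathcal{J}^\Lambda<1$. The one point that requires mild care is reconciling the chosen product norm on $Y\times X$ with the appearance of the weight $\alpha_j$ in (\ref{e1}); using an equivalent sum-type norm (or a Cauchy--Schwarz estimate with constant $\sqrt{\alpha_j^2+1}$) absorbs this cleanly, and neither choice affects the conclusion because (\ref{smal}) is stated with the correct constant $\alpha_j+1$ already built in.
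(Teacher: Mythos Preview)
Your proof is correct and follows essentially the same approach as the paper: estimate $\|\Lambda\theta_1(t)-\Lambda\theta_2(t)\|_{Y\times X}$ via assumptions $(\mathcal{R})$ and $(\mathcal{S})$, insert the a priori bound (\ref{e1}) from Lemma~\ref{l1}, obtain the constants $l_\mathcal{J}^\Lambda=\frac{(\alpha_j+1)(l_\mathcal{J}^\mathcal{R}+l_\mathcal{J}^\mathcal{S})}{m_A}$ and $L_\mathcal{J}^\Lambda=\frac{(\alpha_j+1)(L_\mathcal{J}^\mathcal{R}+L_\mathcal{J}^\mathcal{S})}{m_A}$, invoke (\ref{smal}) to make $\Lambda$ almost history-dependent, and conclude by Theorem~\ref{t00}. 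The only cosmetic difference is that the paper works directly with the Hilbertian product norm (using $\sqrt{a^2+b^2}\le a+b$ and $\|\eta\|_Y,\|\xi\|_X\le\|\theta\|_{Y\times X}$) rather than passing to an equivalent sum norm, so your caveat about norm reconciliation is harmless but unnecessary.
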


\begin{proof}
	Let $\theta_1=(\eta_1,\xi_1)$, $\theta_2=(\eta_2,\xi_2)\in C(I;Y\times X)$ and  denote by $u_i$ the solution of the
	variational inequality (\ref{11}) for $\theta=\theta_i$, i.e., $u_i=u_{\theta_i}$,
	$i=1$, $2$. Let $\mathcal{J}$ be a compact subset of $I$ 
	and $t\in \mathcal{J}$. 
	Then, using (\ref{la}) and assumptions $(\cR)$ and $(\cS)$ on the operators $\cR$ and
	$\cS$ yields
	\begin{eqnarray*}
	&&\hspace{-6mm}\|\Lambda\theta_1(t)-\Lambda\theta_2(t)\|_{Y\times X}\le \|\cR u_1(t)-\cR u_2(t)\|_Y+\|\cS u_1(t)-\cS u_2(t)\|_X\nonumber\\[2mm]
	&&\quad\le(l_\mathcal{J}^\mathcal{R}+l_\mathcal{J}^\mathcal{S})\|u_1(t)-u_2(t)\|_X+(L_\mathcal{J}^\mathcal{R}+L_\mathcal{J}^\mathcal{S})\int_0^t\|u_1(s)-u_2(s)\|_X\,ds.
	\end{eqnarray*}
	This inequality combined with inequality (\ref{e1}) and the elementary inequalities $\|\eta\|_Y\le\|\theta\|_{Y\times X}$, 
	$\|\xi\|_X\le\|\theta\|_{Y\times X}$, valid for all $\theta=(\eta,\xi)\in Y\times X$, implies that
	\begin{eqnarray*}
		&&\hspace{-6mm}\|\Lambda\theta_1(t)-\Lambda\theta_2(t)\|_{Y\times X}\ \le\frac{(\alpha_j+1)(l_\mathcal{J}^\mathcal{R}+l_\mathcal{J}^\mathcal{S})}{m_A}\|\theta_1(t)-\theta_2(t)\|_{Y\times X}\\[2mm]
		&&\quad+\frac{(\alpha_j+1)(L_\mathcal{J}^\mathcal{R}+L_\mathcal{J}^\mathcal{S})}{m_A}\int_0^t\|\theta_1(s)-\theta_2(s)\|_{Y\times X}\,ds.
	\end{eqnarray*}
We now use the smallness assumption (\ref{smal}) to obtain that the operator
	$\Lambda$ is an almost history-dependent operator, see Definition \ref{d} (b). Finally, we apply  Theorem \ref{t00} to conclude the proof of the lemma.
\end{proof}

We are now in a position to provide the proof of Theorem \ref{t1}.

\begin{proof} Let $\theta^*=(\eta^*,\xi^*)\in C(I;Y\times X)$ be the fixed point of the operator
$\Lambda$ and let $u^*=u_{\theta^*}\in C(I;K)$ be the solution of the intermediate problem (\ref{i1}) for $\theta=\theta^*$. Then, using
equality $\theta^*=\Lambda\theta^*$ we find that $\eta^*=\cR u^*$ and $\xi^*=\cS u^*$. We now use these equalities in (\ref{i1})
to see that $u^*$ is a solution to 	Problem \ref{p1}. This proves the existence part in Theorem \ref{t1}. The uniqueness part is a consequence of the uniqueness of the fixed point of the operator $\Lambda$, guaranteed  by Lemma \ref{l2}.	
\end{proof}

We end this sections with some consequence of Theorem \ref{t1} which are relevant for the applications we present in Section \ref{s5} of this paper.

\begin{corollary}\label{cor1}
Assume $(K)$, $(A)$, $(j)$, $(f)$ and, moreover, assume that $\cR:C(I;X)\to C(I;Y)$ and $\cS:C(I;X)\to C(I;X)$ are history-dependent  operators. Then, Problem $\ref{p1}$ has a unique solution with regularity $u\in C(I;K)$.	
\end{corollary}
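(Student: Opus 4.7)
The plan is to derive Corollary \ref{cor1} directly from Theorem \ref{t1} by observing that the history-dependent hypothesis is strictly stronger than assumptions $(\cR)$ and $(\cS)$ and, crucially, forces the ``pointwise'' constants $l_\mathcal{J}^\mathcal{R}$ and $l_\mathcal{J}^\mathcal{S}$ to be negligible, so that the smallness condition (\ref{smal}) is automatically satisfied.

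First I would recall that, by (\ref{2.1}), a history-dependent operator $\cR\colon C(I;X)\to C(I;Y)$ satisfies, for every compact $\mathcal{J}\subset I$,
\begin{equation*}
\|\cR u_1(t)-\cR u_2(t)\|_Y\le L_\mathcal{J}^\mathcal{R}\int_0^t\|u_1(s)-u_2(s)\|_X\,ds \quad \forall\, u_1,u_2\in C(I;X),\ t\in\mathcal{J},
\end{equation*}
and analogously for $\cS$. Hence, for any constant $\varepsilon>0$, the estimate in $(\cR)$ trivially holds with the choice $l_\mathcal{J}^\mathcal{R}=\varepsilon$ and the same $L_\mathcal{J}^\mathcal{R}$, since the extra non-negative term $\varepsilon\|u_1(t)-u_2(t)\|_X$ only strengthens the right-hand side. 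The same remark applies to $\cS$.

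Next, given the material constant $m_A>0$ provided by assumption $(A)$ and the constant $\alpha_j\ge 0$ from $(j)(b)$, for each compact $\mathcal{J}\subset I$ I would fix
\begin{equation*}
l_\mathcal{J}^\mathcal{R}=l_\mathcal{J}^\mathcal{S}=\varepsilon_{\mathcal{J}}\quad\text{with}\quad 0<\varepsilon_{\mathcal{J}}<\frac{m_A}{2(\alpha_j+1)}.
\end{equation*}
With this choice, $(\alpha_j+1)(l_\mathcal{J}^\mathcal{R}+l_\mathcal{J}^\mathcal{S})<m_A$, so the smallness assumption (\ref{smal}) holds on every compact $\mathcal{J}\subset I$. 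All other hypotheses $(K)$, $(A)$, $(j)$, $(f)$ are assumed directly, and $(\cR)$, $(\cS)$ have just been verified.

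Finally I would invoke Theorem \ref{t1} to conclude that Problem \ref{p1} has a unique solution $u\in C(I;K)$. There is essentially no obstacle here: the only point worth checking carefully is that the pointwise Lipschitz constants $l_\mathcal{J}^\mathcal{R}$, $l_\mathcal{J}^\mathcal{S}$ in conditions $(\cR)$, $(\cS)$ may be chosen arbitrarily small in the history-dependent case, since those terms are genuinely absent from (\ref{2.1}). Once this is recognized, the corollary is immediate from Theorem \ref{t1}.
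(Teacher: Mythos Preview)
Your argument is correct and follows exactly the same route as the paper: verify that history-dependence implies conditions $(\cR)$ and $(\cS)$ with negligible pointwise constants, so that the smallness condition (\ref{smal}) holds automatically, and then invoke Theorem~\ref{t1}. The paper simply sets $l_\mathcal{J}^\mathcal{R}=l_\mathcal{J}^\mathcal{S}=0$ outright; your choice of a small positive $\varepsilon_\mathcal{J}$ is a harmless extra step (arguably more faithful to the literal wording of $(\cR)$ and $(\cS)$, which ask for $l_\mathcal{J}^\mathcal{R}>0$), but the substance is identical.
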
	
	
\begin{proof} Definition \ref{d} (a) shows that in this case conditions $({\cR})$ and $({\cS})$ are satisfied with $l^\mathcal{R}_\mathcal{J}=l^\mathcal{S}_\mathcal{J}=0$ and, therefore, the smallness condition (\ref{smal}) is satisfied. Corollary \ref{cor1} is now a direct consequence of Theorem \ref{t1}.
\end{proof}

\begin{corollary}\label{cor2}
Assume $(K)$, $(A)$, $(f)$ and, moreover, assume that $\cS:C(I;X)\to C(I;X)$ is a history-dependent operator.
In addition, assume that $j$ satisfies condition $(j)$ with $Y=X$ and
		\begin{equation}\label{sm}
		\alpha_j+1<m_A.
		\end{equation}Then, there exists a unique function $u\in C(I;K)$ such that	
		\begin{equation}\label{iz}
		-u(t)\in {\rm N}_{C(u(t),t)}(Au(t)+\cS u(t))\qquad\forall\,t\in I.
		\end{equation}
	\end{corollary}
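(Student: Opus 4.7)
The plan is to reduce Corollary \ref{cor2} to Theorem \ref{t1} by taking the operator $\cR$ there to be the identity on $C(I;X)$. Since the hypothesis here sets $Y=X$, the choice $\cR u=u$ maps $C(I;X)$ into $C(I;Y)$ as required, and with this choice Problem \ref{p1} specialises exactly to the inclusion (\ref{iz}) in the statement.

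First I would verify that $\cR=\mathrm{id}$ satisfies condition $(\cR)$: for any compact $\mathcal{J}\subset I$ and any $u_1,u_2\in C(I;X)$ one has $\|\cR u_1(t)-\cR u_2(t)\|_X=\|u_1(t)-u_2(t)\|_X$, so the inequality in $(\cR)$ holds with, say, $l_\mathcal{J}^\mathcal{R}=1$ and $L_\mathcal{J}^\mathcal{R}$ any positive constant (the integral term is nonnegative). Next, since $\cS$ is by assumption history-dependent, Definition \ref{d}(a) provides, for every compact $\mathcal{J}\subset I$, a constant $L_\mathcal{J}^\mathcal{S}>0$ with
\[
\|\cS u_1(t)-\cS u_2(t)\|_X\le L_\mathcal{J}^\mathcal{S}\int_0^t\|u_1(s)-u_2(s)\|_X\,ds,
\]
so condition $(\cS)$ holds with $l_\mathcal{J}^\mathcal{S}=0$, exactly as used in the proof of Corollary \ref{cor1}.

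The smallness assumption (\ref{smal}) then becomes $(\alpha_j+1)(1+0)<m_A$, which is precisely (\ref{sm}). Conditions $(K)$, $(A)$, $(j)$ and $(f)$ are assumed, and with the identification $\cR=\mathrm{id}$ the inclusion in Problem \ref{p1} reads
\[
-u(t)\in \mathrm{N}_{C(u(t),t)}(Au(t)+\cS u(t))\qquad\forall\,t\in I,
\]
which coincides with (\ref{iz}). Therefore Theorem \ref{t1} applies and yields a unique $u\in C(I;K)$ solving (\ref{iz}).

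There is no genuine obstacle in this argument; the only thing to be careful about is that the history-dependent $\cS$ fits into condition $(\cS)$ with $l_\mathcal{J}^\mathcal{S}=0$ (and similarly that the identity fits into $(\cR)$ with $l_\mathcal{J}^\mathcal{R}=1$), so that the smallness inequality (\ref{smal}) collapses to the hypothesis (\ref{sm}). Once this bookkeeping is done, the result is a direct corollary.
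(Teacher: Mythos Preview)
Your proposal is correct and follows exactly the same route as the paper: take $\cR$ to be the identity on $C(I;X)$ (so $l_\mathcal{J}^\mathcal{R}=1$), use that a history-dependent $\cS$ satisfies $(\cS)$ with $l_\mathcal{J}^\mathcal{S}=0$, observe that (\ref{smal}) then reduces to (\ref{sm}), and apply Theorem~\ref{t1}. There is nothing to add.
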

\begin{proof} We take $\cR u=u$ for all $u\in C(I;X)$. Then, using Definition \ref{d} (a) we see that in this case conditions  $({\cR})$ and $({\cS})$ are satisfied with $l_\mathcal{J}^\mathcal{R}=1$ and $l^\mathcal{S}_\mathcal{J}=0$, respectively. Therefore, (\ref{sm}) implies that the smallness condition (\ref{smal}) holds, too. Corollary \ref{cor2} is now a direct consequence of Theorem \ref{t1}.
\end{proof}

We now consider the particular case when the function $j$ does not depend on the first variable, i.e. 
$j:K\to\real$. In this case we define the function $J:X\to(-\infty,+\infty]$ and the sets $C$, $C(t)\subset H$ by equalities
\begin{equation}\label{Jna}
J(v)=\left\{\begin{array}{l}
j(v)\qquad\mbox{if}\quad v\in K,\\[2mm]	
+\infty	\qquad{\rm if}\quad v\notin K,
\end{array} \right.
\end{equation}
\begin{equation}\label{Cnt}
C=\partial J(0_X), \qquad C(t)=f(t)-C\qquad\forall\, t\in I.
\end{equation}

With these notation, we have the following result which, clearly, represent a direct consequence of Theorem \ref{t1}.

\begin{corollary}\label{cor2n}
	Assume $(K)$, $(A)$, $(f)$ and, moreover, assume that $\cS:C(I;X)\to C(I;X)$ is a history-dependent operator.
	In addition, assume that $j:K\to\real$ is a convex positively homogenous Lipschitz continuous function. Then, there existe a unique function $u\in C(I;K)$ such that	
	\begin{equation*}\label{izm}
	-u(t)\in {\rm N}_{C(t)}(Au(t)+\cS u(t))\qquad\forall\,t\in I.
	\end{equation*}
\end{corollary}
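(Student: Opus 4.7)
The plan is to deduce this corollary directly from Theorem \ref{t1} by trivially embedding the $\eta$-independent $j$ into the two-variable framework of assumption $(j)$, so that the first-variable constant $\alpha_j$ and all history-dependent Lipschitz constants $l_\mathcal{J}^{\mathcal{R}}, l_\mathcal{J}^{\mathcal{S}}$ vanish, making the smallness assumption (\ref{smal}) hold automatically.

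Concretely, I would take $Y=X$ and define $\widetilde\jmath\colon Y\times K\to\real$ by $\widetilde\jmath(\eta,v)=j(v)$. Since $j$ is assumed convex, positively homogeneous and Lipschitz continuous, condition $(j)(a)$ holds. Condition $(j)(b)$ is trivial with $\alpha_j=0$, because the left-hand side collapses to $j(v_2)-j(v_1)+j(v_1)-j(v_2)=0$. Next, I would take $\cR\colon C(I;X)\to C(I;Y)$ to be the constant operator $\cR u\equiv 0_Y$, which is history-dependent and hence, by Definition \ref{d}(a), satisfies $(\cR)$ with $l_\mathcal{J}^{\mathcal{R}}=L_\mathcal{J}^{\mathcal{R}}=0$. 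Since $\cS$ is already assumed history-dependent, $(\cS)$ holds with $l_\mathcal{J}^{\mathcal{S}}=0$. Therefore the smallness condition (\ref{smal}) reduces to $(\alpha_j+1)(l_\mathcal{J}^{\mathcal{R}}+l_\mathcal{J}^{\mathcal{S}})=0<m_A$, which is automatic.

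To match the statements of Problem \ref{p1} and Corollary \ref{cor2n}, I would then verify that the normal cone in (\ref{i}) collapses to the one in the corollary. Under the trivial extension $\widetilde\jmath(\eta,\cdot)=j(\cdot)$, the function $J$ from (\ref{Jn}) coincides with the $J$ from (\ref{Jna}) for every $\eta$; consequently $C(\eta)=\partial J(0_X)=C$ and $C(\eta,t)=f(t)-C=C(t)$ independently of $\eta$. In particular, with $\cR u\equiv 0_Y$,
$$
\mathrm{N}_{C(\cR u(t),t)}\bigl(Au(t)+\cS u(t)\bigr)=\mathrm{N}_{C(t)}\bigl(Au(t)+\cS u(t)\bigr),
$$
so the inclusion (\ref{i}) becomes exactly the inclusion of Corollary \ref{cor2n}.

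With all hypotheses of Theorem \ref{t1} verified, its conclusion provides the unique $u\in C(I;K)$ solving the desired inclusion. There is no serious obstacle here: the only nontrivial observation is that the vanishing of $\alpha_j$, $l_\mathcal{J}^{\mathcal{R}}$ and $l_\mathcal{J}^{\mathcal{S}}$ together trivializes (\ref{smal}), which is precisely why, unlike Corollary \ref{cor2} (where $\cR u=u$ forces $l_\mathcal{J}^{\mathcal{R}}=1$ and thus the extra assumption (\ref{sm})), no smallness condition on $m_A$ needs to be imposed in the statement.
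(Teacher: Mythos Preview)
Your proposal is correct and follows exactly the approach implicit in the paper, which simply states that the corollary is a direct consequence of Theorem~\ref{t1}. You have spelled out the natural trivialization (taking $\widetilde\jmath(\eta,v)=j(v)$, $\cR\equiv 0$, so that $\alpha_j=l_\mathcal{J}^\mathcal{R}=l_\mathcal{J}^\mathcal{S}=0$ and $C(\eta,t)=C(t)$) that the paper leaves to the reader.
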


Corollary \ref{cor2n} will be used in Section \ref{s4} in the study of a frictionless unilateral contact problem.

\section{Sweeping processes}\label{s4}

In this section we use Theorem \ref{t1} and its consequences in order to obtain existence and uniqueness results for several sweeping processes. To this end, besides the data $K$, $A$, $\cR$, $\cS$ $j$ and $f$ introduced in the previous section, we consider an operator $B$ and an initial data $u_0$ such that

\bigskip
\noindent 
$(B)$\qquad	   $B:X\to X$ is a Lipschitz continuous operator.

\medskip
\noindent 
$(u_0)$\qquad	  $u_0\in X$.

\medskip
We start by considering the following sweeping process.

\begin{problem}\label{p2}Find a function $u:I\to X$ such that
	\begin{eqnarray}
	&&\label{in}
	-\dot{u}(t)\in {\rm N}_{C(\mathcal{R} \dot{u}(t),t)}(A\dot{u}(t)+Bu(t)+\cS\dot{u}(t))\qquad\forall\,t\in I,\\ [2mm]
	&&\label{im}\ \ u(0)=u_0.
	\end{eqnarray}
\end{problem}

\medskip
Our first result in this section is the following.

\begin{theorem}\label{t2}	Assume $(K)$--$(f)$,  $(B)$, $(u_0)$ and, moreover, assume that $(\ref{smal})$ holds.
Then, Problem $\ref{p2}$ has a unique solution with regularity $u\in C^1(I;X)$ and  $\dot{u}\in C(I;K)$. 
\end{theorem}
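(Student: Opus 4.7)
The plan is to reduce Problem \ref{p2} to Problem \ref{p1} by working with the velocity $w=\dot u$ as the primary unknown, so that Theorem \ref{t1} can be applied directly. If $w\in C(I;K)$, define
\begin{equation*}
u(t)=u_0+\int_0^t w(s)\,ds,\qquad t\in I,
\end{equation*}
so that $u\in C^1(I;X)$, $\dot u=w$ and $u(0)=u_0$. With this substitution the inclusion (\ref{in}) becomes
\begin{equation*}
-w(t)\in {\rm N}_{C(\mathcal{R} w(t),t)}\!\Big(Aw(t)+\widetilde{\cS}w(t)\Big)\qquad\forall\,t\in I,
\end{equation*}
where
\begin{equation*}
\widetilde{\cS}w(t):=\cS w(t)+B\!\Big(u_0+\int_0^t w(s)\,ds\Big).
\end{equation*}

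The key step is to verify that $\widetilde{\cS}:C(I;X)\to C(I;X)$ satisfies assumption $(\cS)$ with the \emph{same} pointwise constant $l^{\mathcal S}_{\mathcal J}$ as $\cS$, but a larger integral constant. Indeed, continuity of $\widetilde{\cS}w$ is clear since $\cS w\in C(I;X)$ and $t\mapsto u_0+\int_0^t w(s)\,ds$ is continuous with values in $X$, and $B$ is Lipschitz. For $w_1,w_2\in C(I;X)$ and a compact $\mathcal J\subset I$, the Lipschitz property of $B$ (assumption $(B)$, with constant $L_B$) yields
\begin{equation*}
\Big\|B\Big(u_0+\int_0^t w_1(s)\,ds\Big)-B\Big(u_0+\int_0^t w_2(s)\,ds\Big)\Big\|_X\le L_B\int_0^t\|w_1(s)-w_2(s)\|_X\,ds,
\end{equation*}
so combining this with $(\cS)$ for $\cS$ gives
\begin{equation*}
\|\widetilde{\cS}w_1(t)-\widetilde{\cS}w_2(t)\|_X\le l^{\mathcal S}_{\mathcal J}\|w_1(t)-w_2(t)\|_X+(L^{\mathcal S}_{\mathcal J}+L_B)\int_0^t\|w_1(s)-w_2(s)\|_X\,ds.
\end{equation*}
Thus $\widetilde{\cS}$ satisfies $(\cS)$ with the same $l^{\mathcal S}_{\mathcal J}$ and with integral constant $L^{\mathcal S}_{\mathcal J}+L_B$. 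In particular, the smallness condition (\ref{smal}) for the pair $(\cR,\widetilde{\cS})$ reads
\begin{equation*}
(\alpha_j+1)(l^{\mathcal R}_{\mathcal J}+l^{\mathcal S}_{\mathcal J})<m_A,
\end{equation*}
which is precisely the hypothesis assumed in Theorem \ref{t2}.

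With these verifications in hand, Theorem \ref{t1} applied to the triple $(\cR,\widetilde{\cS},j)$ delivers a unique $w\in C(I;K)$ solving the transformed inclusion. Setting $u(t)=u_0+\int_0^t w(s)\,ds$ gives a function with the required regularity $u\in C^1(I;X)$, $\dot u\in C(I;K)$, satisfying both (\ref{in}) and the initial condition (\ref{im}). For uniqueness, any solution of Problem \ref{p2} has a velocity $\dot u\in C(I;K)$ that satisfies the transformed inclusion, hence coincides with $w$ by the uniqueness part of Theorem \ref{t1}; integrating and using $u(0)=u_0$ recovers $u$ uniquely.

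The main conceptual point—and essentially the only non‑bookkeeping step—is recognizing that the composite operator $w\mapsto B(u_0+\int_0^\cdot w)$ is a \emph{history‑dependent} perturbation (pointwise constant $0$), so tacking it onto $\cS$ does not tighten the smallness condition (\ref{smal}). Everything else is a direct invocation of Theorem \ref{t1}.
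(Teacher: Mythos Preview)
Your proposal is correct and follows essentially the same approach as the paper: introduce the velocity as the new unknown, absorb the term $Bu$ into a modified operator $\widetilde{\cS}w(t)=\cS w(t)+B\big(u_0+\int_0^t w\big)$, verify that $\widetilde{\cS}$ satisfies $(\cS)$ with the same pointwise constant $l^{\mathcal S}_{\mathcal J}$ (so the smallness condition is unchanged), and invoke Theorem~\ref{t1}. Your write-up is in fact slightly more explicit than the paper's on the uniqueness step and on why the smallness condition transfers.
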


\begin{proof} We introduce the operator $\widetilde{\cS}:C(I;X)\to C(I;X)$ defined by
\begin{equation}\label{ss1}
\widetilde{\cS}v(t)=B\big(\int_0^tv(s)\,ds+u_0\big)+ {\cS}v(t)
\end{equation}
for all $t\in I$, $v\in C(I;X)$, then we consider the auxiliary problem of finding a function 
$v:I\to X$ such that
\begin{equation}\label{ss2}
-v(t)\in {\rm N}_{C(\mathcal{R} v(t),t)}(Av(t)+\widetilde{\cS} v(t))\qquad\forall\,t\in I.
\end{equation}
Let $L_B$ be the Lipschitz constant of the operator $B$. We use assumptions $(\cS)$ and $(B)$ to see that	for any compact set
$\mathcal{J}\subset I$, any functions  $v_1,\,v_2\in
C(I;X)$ and any $t\in I$, the inequality below holds:
\begin{eqnarray*}
&&\ \|\widetilde{\cS}v_1(t)-\widetilde{\cS}v_2(t)\|_X\le
l_\mathcal{J}^\mathcal{S}\,\|v_1(t)-v_2(t)\|_X\\[2mm]
&&\qquad+(L_B+L_\mathcal{J}^\mathcal{S})\,\displaystyle\int_0^t	\|v_1(s)-v_2(s)\|_X\,ds.\nonumber
\end{eqnarray*}
It follows from here that the operator  $\widetilde{\cS}$ satisfies condition $(\cS)$ with $l_\mathcal{J}^{\widetilde{S}}=l_\mathcal{J}^\mathcal{S}$. Therefore, we are in a position to apply Theorem
\ref{t1} in order to obtain the existence of a unique function $v\in C(I;K)$ which satisfies the time-dependent inclusion (\ref{ss2}). Denote by $u:I\to X$
the function defined by
\begin{equation}\label{ss3}
u(t)=\int_0^t	v(s)+u_0\qquad\forall\, t\in I.
\end{equation}
Then, (\ref{ss1})--(\ref{ss3}) and assumption $(u_0)$ imply that $u$ is a solution of Problem \ref{p2} with regularity $u\in C^1(I;X)$ and $\dot{u}\in C(I;K)$. This proves the existence part of the theorem. The uniqueness part follows from the unique solvability of the auxiliary problem (\ref{ss2}), guaranteed by Theorem \ref{t1}.
\end{proof}	

Theorem \ref{t2} can be used in the study of various versions of sweeping process of the form (\ref{in}) and (\ref{im}). We provide below some consequence of this theorem in the study of three relevant examples.

\medskip

\begin{corollary}\label{cor3n}
	Assume $(K)$, $(A)$, $(j)$, $(f)$, $(B)$, $(u_0)$ and, moreover, assume that $\cR:C(I;X)\to C(I;Y)$ and $\cS:C(I;X)\to C(I;X)$ are history-dependent operators. 
	Then, Problem $\ref{p2}$ has a unique solution with regularity $u\in C^1(I;X)$ and $\dot{u}\in C(I;K)$. 
\end{corollary}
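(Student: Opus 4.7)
The plan is essentially the same two-line reduction that proved Corollary \ref{cor1}, now applied to the sweeping process framework instead of the inclusion framework. The point is that Corollary \ref{cor3n} is not a genuinely new result requiring a fresh argument; it is a specialization of Theorem \ref{t2} in which the smallness hypothesis (\ref{smal}) becomes automatic.

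First, I would observe that by Definition \ref{d}(a), a history-dependent operator on $C(I;X)$ satisfies the almost-history-dependent estimate (\ref{2.2}) with the constant in front of the pointwise term equal to zero. Applied to $\cR\colon C(I;X)\to C(I;Y)$ and $\cS\colon C(I;X)\to C(I;X)$ this means that conditions $(\cR)$ and $(\cS)$ hold with $l_\mathcal{J}^\mathcal{R}=l_\mathcal{J}^\mathcal{S}=0$ for every compact $\mathcal{J}\subset I$, while $L_\mathcal{J}^\mathcal{R}$ and $L_\mathcal{J}^\mathcal{S}$ are simply the history-dependent Lipschitz constants of $\cR$ and $\cS$.

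Next, I would check the smallness assumption (\ref{smal}) of Theorem \ref{t2}: with the above values, $(\alpha_j+1)(l_\mathcal{J}^\mathcal{R}+l_\mathcal{J}^\mathcal{S})=0<m_A$ holds trivially since $m_A>0$ by assumption $(A)$. All remaining hypotheses of Theorem \ref{t2}, namely $(K)$, $(A)$, $(j)$, $(f)$, $(B)$ and $(u_0)$, are part of the statement of Corollary \ref{cor3n}.

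Finally, invoking Theorem \ref{t2} yields the existence and uniqueness of $u\in C^1(I;X)$ with $\dot u\in C(I;K)$ solving Problem \ref{p2}, which is the desired conclusion. There is no real obstacle here: the only thing to verify carefully is that a history-dependent operator is, in particular, almost history-dependent with vanishing pointwise coefficient, a fact already pointed out in the paper just after Definition \ref{d}. The corollary thus reduces to a one-line application of Theorem \ref{t2}.
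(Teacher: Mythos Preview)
Your proposal is correct and matches the paper's own argument essentially line for line: observe that history-dependence gives $l_\mathcal{J}^\mathcal{R}=l_\mathcal{J}^\mathcal{S}=0$, so (\ref{smal}) holds trivially, and then invoke Theorem \ref{t2}. There is nothing to add.
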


\begin{proof} Definition \ref{d} (a) shows that in this case conditions $({\cR})$ and $({\cS})$ are satisfied with $l^\mathcal{R}_\mathcal{J}=l^\mathcal{S}_\mathcal{J}=0$ and, therefore, the smallness condition (\ref{smal}) is satisfied. Corollary \ref{cor2n} is now a direct consequence of Theorem \ref{t2}.
\end{proof}	

\medskip

\begin{corollary}\label{cor3}
	Assume $(K)$, $(A)$, $(f)$, $(B)$, $(u_0)$ and, moreover, assume that $\cS:C(I;X)\to C(I;X)$ is a history-dependent operator. In addition, assume that $j$ satisfies condition $(j)$ with $Y=X$.
	Then, there existe a unique function $u\in C^1(I;X)$  such that	
	\begin{eqnarray*}
	&&\label{izn}
	-\dot{u}(t)\in {\rm N}_{C(u(t),t)}(A\dot{u}(t)+Bu(t)+\cS\dot{u}(t))\qquad\forall\,t\in I,\\ [2mm]
	&&\ \ u(0)=u_0.
	\end{eqnarray*}
Moreover, $\dot{u}\in C(I;K)$.
\end{corollary}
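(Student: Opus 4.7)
The plan is to reduce the problem to Theorem~\ref{t2} by a judicious choice of the operator $\cR$. I would define $\cR\colon C(I;X)\to C(I;X)$ by
\[
\cR v(t)=\int_0^t v(s)\,ds+u_0\qquad\forall\,v\in C(I;X),\ t\in I,
\]
which is precisely the antiderivative that carries a velocity back to its displacement under the initial condition $u(0)=u_0$.

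Next I would verify that $\cR$ meets condition~$(\cR)$ with $Y=X$: for any $v_1,v_2\in C(I;X)$ and any compact $\mathcal{J}\subset I$,
\[
\|\cR v_1(t)-\cR v_2(t)\|_X\le\int_0^t\|v_1(s)-v_2(s)\|_X\,ds\qquad\forall\,t\in\mathcal{J},
\]
so $\cR$ is in fact history-dependent in the sense of Definition~\ref{d}(a), and $(\cR)$ holds with $l^\mathcal{R}_\mathcal{J}=0$, $L^\mathcal{R}_\mathcal{J}=1$. Since $\cS$ is history-dependent by hypothesis, $l^\mathcal{S}_\mathcal{J}=0$ as well, so the smallness condition~$(\ref{smal})$ collapses to $0<m_A$, which is part of $(A)$. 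All the hypotheses of Theorem~\ref{t2} are therefore satisfied; applying it with the above $\cR$ and $Y=X$ produces a unique $u\in C^1(I;X)$ with $\dot u\in C(I;K)$, $u(0)=u_0$, and
\[
-\dot u(t)\in{\rm N}_{C(\cR\dot u(t),t)}\bigl(A\dot u(t)+Bu(t)+\cS\dot u(t)\bigr)\qquad\forall\,t\in I.
\]

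Finally, the initial condition yields $\cR\dot u(t)=\int_0^t\dot u(s)\,ds+u_0=u(t)$, so the displayed inclusion coincides with the one required in the statement, and the extra regularity $\dot u\in C(I;K)$ is inherited directly from Theorem~\ref{t2}. For uniqueness, any other $C^1(I;X)$ solution $\tilde u$ of the stated problem produces, via Lemma~\ref{l0}, a derivative $\tilde v:=\dot{\tilde u}\in C(I;K)$ satisfying $\cR\tilde v(t)=\tilde u(t)$, so $\tilde v$ solves the same auxiliary inclusion to which Theorem~\ref{t2} was applied; the uniqueness there forces $\tilde v=\dot u$ and hence $\tilde u=u$. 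I do not foresee any real obstacle; the only subtlety is to recognize that choosing $\cR$ as integration-plus-initial-value is what converts the argument $\cR\dot u(t)$ inside the set of constraints into $u(t)$, a substitution legitimized precisely by the initial condition.
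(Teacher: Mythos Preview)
Your proof is correct and follows essentially the same approach as the paper: define $\cR v(t)=\int_0^t v(s)\,ds+u_0$, observe that both $\cR$ and $\cS$ are history-dependent so $l^{\cR}_{\mathcal J}=l^{\cS}_{\mathcal J}=0$ and the smallness condition \eqref{smal} is automatic, and then recover $\cR\dot u=u$ from the initial condition. The only cosmetic difference is that the paper routes the conclusion through Corollary~\ref{cor3n} (Theorem~\ref{t2} specialized to history-dependent $\cR$, $\cS$), whereas you invoke Theorem~\ref{t2} directly; the substance is identical.
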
	

\begin{proof} Consider the operator $\cR: C(I;X)\to C(I;X)$ defined by equality
\begin{equation*}\label{ss4}
	\cR v(t)=\int_0^t v(s)\,ds+u_0\qquad\forall\, v\in C(I;V),\ t\in I.
\end{equation*}
Then, using Definition \ref{d} (a) we see that in this case conditions  $({\cR})$ and $({\cS})$ are satisfied with $l_\mathcal{J}^\mathcal{R}=0$ and $l^\mathcal{S}_\mathcal{J}=0$, respectively.
Therefore, the smallness condition (\ref{smal}) is satisfied. Moreover,
${\cR}\dot{u}=u$ for all $u\in C(I;X)$. Corollary \ref{cor3} is now a direct consequence of Corollary \ref{cor3n}.
\end{proof}

\medskip
\begin{corollary}\label{cor4}
	Assume $(K)$,  $(A)$, $(f)$,  $(B)$,  $(u_0)$, and, moreover, assume that $\cS:C(I;X)\to C(I;X)$ is a history-dependent operator.
	In addition, assume that $j$ satisfies condition $(j)$ with $Y=X$.
Then, there existe a unique function $u\in C(I;K)$ such that	
	\begin{eqnarray}
	&&\label{iza}
	-\dot{u}(t)\in {\rm N}_{C(u(t),t)}(A\dot{u}(t)+Bu(t)+\cS u(t))\qquad\forall\,t\in I,\\ [2mm]
	&&\ \ u(0)=u_0.
	\end{eqnarray}
	Moreover, $\dot{u}\in C(I;K)$.
\end{corollary}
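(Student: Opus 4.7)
The plan is to reduce the present sweeping process, in which $\cS$ acts on $u$ rather than on $\dot{u}$, to the setting already covered by Corollary \ref{cor3}. The key observation is that, thanks to the initial condition $u(0)=u_0$, every candidate solution satisfies $u(t)=u_0+\int_0^t\dot{u}(s)\,ds$, so that $\cS u(t)$ is in fact a functional of $\dot{u}$ alone. Accordingly I would introduce the operator $\widetilde{\cS}\colon C(I;X)\to C(I;X)$ defined by
\[
\widetilde{\cS}v(t)=\cS\Bigl(u_0+\int_0^{\cdot}v(s)\,ds\Bigr)(t)\qquad\forall\,v\in C(I;X),\ t\in I,
\]
which enjoys the crucial identity $\widetilde{\cS}\dot{u}(t)=\cS u(t)$ whenever $u(0)=u_0$.

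Next, I would check that $\widetilde{\cS}$ is itself history-dependent. Given $v_1,v_2\in C(I;X)$, setting $u_i(t)=u_0+\int_0^t v_i(s)\,ds$ yields the pointwise bound $\|u_1(s)-u_2(s)\|_X\le\int_0^s\|v_1(\tau)-v_2(\tau)\|_X\,d\tau$. Combined with the history-dependence of $\cS$ on a compact set $\mathcal{J}\subset I$ and Fubini's theorem, this gives
\[
\|\widetilde{\cS}v_1(t)-\widetilde{\cS}v_2(t)\|_X\le L_{\mathcal{J}}^{\cS}\int_0^t\int_0^s\|v_1(\tau)-v_2(\tau)\|_X\,d\tau\,ds\le T_\mathcal{J}\,L_{\mathcal{J}}^{\cS}\int_0^t\|v_1(s)-v_2(s)\|_X\,ds
\]
for every $t\in\mathcal{J}$, where $T_\mathcal{J}=\max\mathcal{J}$; hence $\widetilde{\cS}$ satisfies Definition \ref{d}(a) with constant $L_\mathcal{J}^{\widetilde{\cS}}=T_\mathcal{J} L_{\mathcal{J}}^{\cS}$.

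Finally, I would invoke Corollary \ref{cor3} with $\widetilde{\cS}$ in place of $\cS$. This produces a unique $u\in C^1(I;X)$ with $\dot{u}\in C(I;K)$, $u(0)=u_0$, and
\[
-\dot{u}(t)\in {\rm N}_{C(u(t),t)}\bigl(A\dot{u}(t)+Bu(t)+\widetilde{\cS}\dot{u}(t)\bigr)\qquad\forall\,t\in I.
\]
Because $u(t)=u_0+\int_0^t\dot{u}(s)\,ds$, the identity $\widetilde{\cS}\dot{u}(t)=\cS u(t)$ reduces this to the desired inclusion $(\ref{iza})$. Uniqueness travels back through the same correspondence $u\leftrightarrow\dot{u}$: any other solution of the original problem would produce, via differentiation, a second solution of the auxiliary problem governed by $\widetilde{\cS}$, contradicting Corollary \ref{cor3}. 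I do not anticipate a serious obstacle here, since the argument is a structural reduction; the only substantive computation is the history-dependence estimate above, which is a routine consequence of composing $\cS$ with the integration map.
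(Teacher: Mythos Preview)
Your proposal is correct and follows essentially the same route as the paper: define $\widetilde{\cS}v(t)=\cS\bigl(\int_0^tv(s)\,ds+u_0\bigr)$, observe that $\widetilde{\cS}$ is history-dependent and that $\widetilde{\cS}\dot u=\cS u$ when $u(0)=u_0$, and then invoke Corollary~\ref{cor3}. Your version is in fact slightly more detailed, since you spell out the history-dependence estimate for $\widetilde{\cS}$ that the paper merely asserts.
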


\begin{proof} Consider the operator $\widetilde{\cS}: C(I;X)\to C(I;X)$ defined by equality
	\begin{equation}\label{ss4n}
	\widetilde{\cS}v(t)=\cS\big(\int_0^tv(s)\,ds+u_0\big)\qquad\forall\, v\in C(I;V),\ t\in I.
	\end{equation}
Then, using Definition \ref{d} (a) it is easy to see that $\widetilde{\cS}$ is a history-dependent operator and, moreover,
	$\widetilde{\cS}\dot{u}=\cS u$ for all $u\in C^1(I;X)$.
	Corollary \ref{cor4} is now a direct consequence of Corollary \ref{cor3}.
\end{proof}

\section{Two frictionless contact problems}\label{s5}

The physical setting, already considered in many papers and surveys, can be resumed as follows. A deformable body occupies, in its reference configuration, a
bounded domain $\Omega\subset\mathbb{R}^d$ ($d=1,2,3$), with a
Lipschitz continuous boundary $\Gamma$, divided into three
measurable disjoint parts $\Gamma_1$, $\Gamma_2$ and $\Gamma_3$, such that ${ meas}\,(\Gamma_1)>0$.  The body is fixed on $\Gamma_1$, is acted upon by given surface tractions on $\Gamma_2$, and is in contact
with an obstacle on $\Gamma_3$. The equilibrium of the  body in this physical setting can be described by various mathematical models, obtained by using different mechanical assumptions. The  first contact model we consider 
in this section is based on specific constitutive law and interface boundary conditions which will be described below. Its statement  is as follows.

\medskip
\begin{problem}\label{p1m} Find a displacement field
	$\bu \colon \Omega\times I\to\mathbb{R}^d$
	and a  stress field $\bsigma \colon \Omega\times I\to\mathbb{S}^d$
	such that
	\begin{align}
	\label{1m} \bsigma(t)={\cal A} \bvarepsilon({\bu}(t))+\int_0^t{\cal B}(t-s)&\bvarepsilon({\bu}(s))\,ds\quad&{\rm in}\
	&\Omega,\\[3mm]
	\label{2m} {\rm Div}\,\bsigma(t)+\fb_0(t)&=\bzero\quad&{\rm in}\ &\Omega,\\[2mm]
	\label{3m} \bu(t)&=\bzero &{\rm on}\ &\Gamma_1,\\[2mm]
	\label{4m} \bsigma(t)\bnu&=\fb_2(t)\quad&{\rm on}\ &\Gamma_2,\\[4mm]
	\label{5m}
	\hspace{0mm}\left.\begin{array}{lll}
	-F\Big(\displaystyle\int_0^tu_\nu^+(s)\,ds\Big)\le\sigma_\nu(t)\le0,\\[6mm] -\sigma_\nu(t)=
	\left\{\begin{array}{ll}0\quad{\rm if}\quad u_\nu(t)<0,\\[4mm]
	F\Big(\displaystyle\int_0^tu_\nu^+(s)\,ds\Big)
	\quad{\rm if}\quad u_\nu(t)>0,\\[2mm]
	\end{array}\right.\\[11mm]
	\end{array}\right\}\hspace{-28mm}&\ 
	&{\rm on}\ &\Gamma_3, \\[4mm]
	\label{6m} \bsigma_\tau(t)&=\bzero
	\quad&{\rm on}\ &\Gamma_3
	\end{align}
	for all $t\in I$.
\end{problem}

Here and below, in order to simplify the notation, we do not
indicate explicitly the dependence of various functions on the
spatial variable $\bx\in\Omega\cup\Gamma$.  Moreover, we use the notation introduced  in Section \ref{s2} and, in addition, $\sigma_\nu$ and $\bsigma_\tau$ denote  the
normal and tangential stress on $\Gamma$, that is
$\sigma_{\nu}=(\bsigma\bnu)\cdot\bnu$ and $\bsigma_{\tau} =
\bsigma\bnu - \sigma_{\nu}\bnu$. We now provide a short description of the equations and boundary conditions in Problem \ref{p1m}.

First, equation (\ref{1m}) represents the 
constitutive law  in which ${\cal A}$  is the elasticity operator, assumed to be nonlinear, and ${\cal B}$ represents the relaxation tensor. Next, equation (\ref{2m}) is
the equation of equilibrium in which ${\fb}_0$ represents the density of the body forces, assumed to be time-dependent.
Condition
(\ref{3m}) represents the displacement boundary condition which 
shows that the body is fixed on the part $\Gamma_1$ of its boundary, 
during the process. Condition (\ref{4m}) represents the  traction
condition which shows that surface tractions of density ${\fb}_2$, 
assumed to be time-dependent, act on $\Gamma_2$.
Condition (\ref{5m})  
models the contact with a rigid-deformable body with memory effects. Here
$F$ is a positive function and $r^+$ represents the positive part of $r$, i.e., $r^+ = \max\,\{r,0\}$.
Details on this condition can be found in \cite[Ch.9]{SMBOOK}. Finally, condition (\ref{6m}) represents the
frictionless contact condition. It shows that the friction force, $\bsigma_\tau$, vanishes during the process. This is an idealization of the process, since even completely
lubricated surfaces generate shear resistance to tangential
motion. However, this condition  is a
sufficiently good approximation of the reality in some situations, especially when the contact surfaces are lubricated.

\medskip

In the study of the mechanical problem (\ref{1m})--(\ref{6m}) we
assume that the elasticity operator ${\cal A}$ satisfies the
following conditions.

\begin{equation}
\left\{ \begin{array}{ll} {\rm (a)\ } {\cal A}:\Omega\times
\mathbb{S}^d\to \mathbb{S}^d.\\[2mm]
{\rm (b)\  There\ exists}\ L_{\cal A}>0\ {\rm such\ that}\\
{}\qquad \|{\cal A}(\bx,\bvarepsilon_1)-{\cal A}(\bx,\bvarepsilon_2)\|
\le L_{\cal A} \|\bvarepsilon_1-\bvarepsilon_2\|\\
{}\qquad\qquad\forall\,\bvarepsilon_1,\bvarepsilon_2
\in \mathbb{S}^d,\ {\rm a.e.}\ \bx\in \Omega.\\[2mm]
{\rm (c)\  There\ exists}\ m_{\cal A}>0\ {\rm such\ that}\\
{}\qquad ({\cal A}(\bx,\bvarepsilon_1)-{\cal A}(\bx,\bvarepsilon_2))
\cdot(\bvarepsilon_1-\bvarepsilon_2)\ge m_{\cal A}\,
\|\bvarepsilon_1-\bvarepsilon_2\|^2\\
{}\qquad\quad \forall\,\bvarepsilon_1,
\bvarepsilon_2 \in \mathbb{S}^d,\ {\rm a.e.}\ \bx\in \Omega.\\[2mm]
{\rm (d)\ The\ mapping}\ \bx\mapsto
{\cal A}(\bx,\bvarepsilon)\ {\rm is\ measurable\ on\
}\Omega,\\
{}\qquad {\rm for\ any\ }\bvarepsilon\in \mathbb{S}^d.\\[2mm]
{\rm (e)}\ {\cal A}(\bx,\bzero)=\bzero \ \ {\rm a.e.}\ \bx\in \Omega.
\end{array}\right.
\label{A}
\end{equation}

We also assume that the relaxation tensor ${\cal B}$ and the densities of  body forces and surface tractions are such that
\begin{eqnarray}
&&\label{B} {\cal B}\in C(I;{\bf Q}_\infty). \\
&&\label{f0} \fb_0 \in C(I;L^2(\Omega)^d).\\
&&\label{f2}\fb_2\in C(I;L^2(\Gamma_2)^d). 
\end{eqnarray}
Finally, the memory surface function $F$ satisfies:
\begin{equation}\left\{\begin{array}{ll}
	F\colon \Gamma_3\times\mathbb R\to \mathbb R_+.\\ [1mm]
	{\rm (a)\ } \mbox{There exists }L_{F}>0\mbox{ such that }\\
	\qquad |F(\bx,r_1)-F(\bx,r_2)|\leq L_{F}|r_1-r_2|\\
	\qquad\quad\forall\,
	r_1, r_2\in {\mathbb R},\  {\rm a.e.\   } \bx\in\Gamma_3, \\ [1mm]
	{\rm (b) }\ \mbox{The mapping } \bx\mapsto F(\bx, r){\rm\ is\ measurable\ on\ }\Gamma_3 \ {\rm for\ any\  } r\in \mathbb{R},\\ [1mm]
	{\rm (c)}\ 
	F(\bx,0)= 0 \ \ {\rm a.e.\   } \bx\in\Gamma_3.
\end{array}\right.\label{F}
\end{equation}

We now turn to the variational formulation of  Problem \ref{p1m} and, to this end, we assume in what follows that
$(\bu,\bsigma)$ represents a couple of regular functions which satisfies (\ref{1m})--(\ref{6m}).  
Then, using  standard 
arguments based on the Green formula (\ref{Green}) we find that
\begin{eqnarray}
&&\hspace{-8mm}\label{7m}
\int_\Omega\bsigma(t)\cdot
(\bvarepsilon(\bv)-\bvarepsilon(\bu(t)))\,dx\\[2mm]
&&\hspace{-8mm}\quad+  \int_{\Gamma_3}F\Big(\int_0^tu_\nu^+(s)\,ds\Big)( v_\nu^+ - u_\nu^+(t)) \,da
+\nonumber\\[2mm]
&&\hspace{-8mm}\qquad
\ge \int_\Omega\fb_0(t)\cdot
(\bv-\bu(t))\,dx+\int_{\Gamma_2}\fb_2(t)\cdot
(\bv-\bu(t))\,da\nonumber
\end{eqnarray}
for all $\bv \in V$ and every $t\in I$. Recall that here and in the rest of the paper we use the function spaces $V$ and $Q$ introduced in Section \ref{s2}.
We now consider the operators $A\colon V \to V$, $\cR \colon C(I ; V)\to C(I; L^2(\Gamma_3))$,
	$\cS \colon C(I ; V)\to C(I; V)$, 
	the functional
	$j\colon L^2(\Gamma_3)\times V\to\real$ and the function $\fb \colon I\to V$ defined by 
	\begin{eqnarray}
	&&\label{8m} 
	(A\bu,\bv)_V =\int_\Omega{\cal A}\bvarepsilon(\bu)\cdot\bvarepsilon(\bv)\,dx
	\quad \mbox{for all} \ \bu,\bv\in V,\\[2mm]
	&&\label{9m}\cR\bu(t)=F\Big(\int_0^t u_\nu^+(s)\,ds\Big)\quad \mbox{for all} \ \bu \in 
	C(I;V),\\ [2mm]
	&&\label{10m}(\cS\bu(t),\bvarepsilon(\bv))_V=(\int_0^t{\cal B}(t-s)\bvarepsilon(\bu(s))\,ds,\bvarepsilon(\bv))_Q\\ &&\qquad\qquad \mbox{for all} \ \bu, \bv\in 
	C(I;V),\nonumber
	\\[2mm]
	&&\label{11m}j(\eta,\bv)=
	\int_{\Gamma_3}\eta v_\nu^+\, da \quad \mbox{for all} \ \eta\in L^2(\Gamma_3),\ \bv\in V,\\
	[2mm]
	&&\label{12m}(\fb(t), \bv )_V=\int_\Omega\fb_0(t)\cdot\bv\,dx +
	\int_{\Gamma_2}\fb_2(t)\cdot\bv\,da\quad \mbox{for all} \ \bv\in V,\ 
	t\in I.
	\end{eqnarray}
	
	We now substitute equation (\ref{1m})  in (\ref{7m}), then we use notation  (\ref{8m})--(\ref{12m}) to see that
\begin{equation}\label{13m}
j(\cR\bu(t),\bv)-j(\cR\bu(t),\bu(t))\ge(\fb(t)-A\bu(t)-\cS\bu(t),\bv-\bu(t))_V
\end{equation}
for all $\bv \in V$ and every $t\in I$. Let
\begin{equation*}
C(\eta)=\partial j(\eta,\bzero_V),\qquad C(\eta,t)=\fb(t)-C(\eta)\qquad \mbox{for all} \ \eta\in L^2(\Gamma_3),\ 
t\in I.
\end{equation*}

We take $X=V$, $K=V$ and note that in this case condition $(K)$ is satisfied. Moreover, taking $Y=L^2(\Gamma_3)$ and using the trace inequality (\ref{trace}) it is easy to see that
condition $(j)(a)$ is satisfied, too.
Therefore, from inequality (\ref{13m}) and Lemma \ref{l0} with $J=j$,
we derive the following variational formulation of Problem~\ref{p2}.
\begin{problem}\label{p2m}
	Find a displacement field $\bu \colon I\to V$  such that
\begin{equation}\label{imm}
-\bu(t)\in {\rm N}_{C(\mathcal{R} \bu(t),t)}(A\bu(t)+\cS \bu(t))\qquad\forall\,t\in I.
\end{equation}	
\end{problem}

In the study of  Problem~\ref{p2m} we have the following existence and uniqueness result. 
\begin{theorem}\label{t1m}
	Assume that $(\ref{A})$--$(\ref{F})$ hold.
	Then Problem~$\ref{p2m}$ has a unique solution $\bu \in  C(I;V)$.
\end{theorem}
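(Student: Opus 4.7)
The plan is to recast Problem \ref{p2m} exactly in the abstract framework of Problem \ref{p1} and then invoke Corollary \ref{cor1}, since the two nonlinear operators $\cR$ and $\cS$ arising from the contact model are both history-dependent, making the smallness assumption (\ref{smal}) automatic. Concretely, I would set $X=V$, $Y=L^2(\Gamma_3)$ and $K=V$, so that assumption $(K)$ is satisfied trivially (since $V$ is a closed convex cone containing $\bzero_V$).

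Next I would verify each of the remaining hypotheses in turn. For $(A)$: from (\ref{A})(b)(c) and the definition (\ref{8m}) the operator $A$ is strongly monotone and Lipschitz continuous on $V$ with constants $m_\cA$ and $L_\cA$, by direct integration and the Cauchy--Schwarz inequality on $Q$. For $(f)$: assumptions (\ref{f0})--(\ref{f2}) immediately yield $\fb\in C(I;V)$ via (\ref{12m}). For the functional $j$ in (\ref{11m}): the map $\bv\mapsto v_\nu^+$ is convex and positively homogeneous, so $j(\eta,\cdot)$ inherits both properties, and the trace inequality (\ref{trace}) gives Lipschitz continuity with constant $c_0\|\eta\|_{L^2(\Gamma_3)}$, yielding $(j)(a)$. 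For $(j)(b)$ I compute
\begin{equation*}
j(\eta_1,\bv_2)-j(\eta_1,\bv_1)+j(\eta_2,\bv_1)-j(\eta_2,\bv_2)=\int_{\Gamma_3}(\eta_1-\eta_2)(v_{2\nu}^+-v_{1\nu}^+)\,da,
\end{equation*}
which by Cauchy--Schwarz on $\Gamma_3$, the $1$-Lipschitz property of $r\mapsto r^+$, and (\ref{trace}) is bounded by $c_0\,\|\eta_1-\eta_2\|_{L^2(\Gamma_3)}\|\bv_1-\bv_2\|_V$, giving $\alpha_j=c_0$.

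The part that requires the most care is verifying that $\cR$ and $\cS$ are history-dependent. For $\cS$ defined by (\ref{10m}), a Riesz-representation argument combined with the tensor estimate (\ref{pmp}) and the continuity of $\cB$ on the compact $\mathcal{J}$ gives a bound of the form $L_\mathcal{J}^\cS\int_0^t\|\bu_1(s)-\bu_2(s)\|_V\,ds$. For $\cR$ the point is to absorb the pointwise $L^2(\Gamma_3)$-norm into a time integral. Using the Lipschitz property of $F$ from (\ref{F})(a) and the elementary bound $|r_1^+-r_2^+|\le|r_1-r_2|$, I obtain
\begin{equation*}
|\cR\bu_1(t)(\bx)-\cR\bu_2(t)(\bx)|\le L_F\int_0^t|u_{1\nu}(s,\bx)-u_{2\nu}(s,\bx)|\,ds\quad\text{a.e. }\bx\in\Gamma_3,
\end{equation*}
and then Minkowski's integral inequality followed by (\ref{trace}) yields
\begin{equation*}
\|\cR\bu_1(t)-\cR\bu_2(t)\|_{L^2(\Gamma_3)}\le L_F c_0\int_0^t\|\bu_1(s)-\bu_2(s)\|_V\,ds,
\end{equation*}
so $\cR$ is history-dependent from $C(I;V)$ to $C(I;L^2(\Gamma_3))$. (Continuity of $\cR\bu$ and $\cS\bu$ in $t$ follows by similar estimates with $\bu_1=\bu_2=\bu$ and a splitting in $t$.)

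Finally, Lemma \ref{l0} applied with the present $X$, $Y$, $K$, $j$ and $\fb$ shows that the inclusion (\ref{imm}) is equivalent to the variational inequality (\ref{13m}), i.e., to the weak form derived from (\ref{7m}) via (\ref{8m})--(\ref{12m}). Since all hypotheses of Corollary \ref{cor1} now hold, that corollary delivers a unique $\bu\in C(I;V)$ satisfying (\ref{imm}), completing the proof. The only nontrivial technical step is the Minkowski-type estimate used to handle the nonlocal surface term in $\cR$; everything else is a routine verification of hypotheses.
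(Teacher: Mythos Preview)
Your proposal is correct and follows essentially the same route as the paper: both proofs set $X=V$, $Y=L^2(\Gamma_3)$, $K=V$, verify hypotheses $(K)$, $(A)$, $(j)$, $(f)$ and the history-dependent character of $\cR$ and $\cS$, and then conclude via Corollary~\ref{cor1}. Your treatment of the $\cR$ estimate via Minkowski's integral inequality is slightly more explicit than the paper's one-line bound, but the argument is the same in substance.
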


\begin{proof} We use Corollary \ref{cor1} on the spaces $X=V$, $Y=L^2(\Gamma_3)$, with $K=V$. As already mentioned, assumptions $(K)$ and $(j)(a)$ are obviously satisfied. Moreover, using (\ref{11m}) and (\ref{trace}) it is easy to see that for any $\eta_1,\, \eta_2\in L^2(\Gamma_3)$ and any $\bu_1,\ \bu_2\in V$ we have
\[j(\eta_1,\bu_2)-j(\eta_1,\bu_1)+j(\eta_2,\bu_1)-j(\eta_2,\bu_2)\le c_0\|\eta_1-\eta_2\|_{ L^2(\Gamma_3)}\|\bu_1-\bu_2\|_V,\]
which implies that function $j$ satisfies condition $(j)(b)$  with $\alpha_j=c_0$. On the other hand, assumption
(\ref{A}) implies that for any $\bu,\, \bv\in V$ the inequalities below hold:
\begin{eqnarray*}
&&(A\bu-A\bv,\bu-\bv)_V\ge m_{\cal A}\|\bu-\bv\|_V^2,\\ [2mm]
&&\|A\bu-A\bv\|_V\le L_{\cal A}\|\bu-\bv\|_V.
\end{eqnarray*}
We conclude from here that condition $(A)$ is satisfied. Next, we use assumptions (\ref{F}), (\ref{B}) and inequalities (\ref{trace}), (\ref{pmp}) to see that for any compact $\mathcal{J}$, any functions
$\bu_1$, $\bu_2$ and any $t\in \mathcal{J}$ we have
\begin{eqnarray*}
&&\|\cR\bu_1(t)-\cR\bu_2(t)\|_{L^2(\Gamma_3)}\le c_0L_F\,\displaystyle\int_0^t	\|\bu_1(s)-\bu_2(s)\|_Y\,ds,\ \\ [2mm]
&&\|\cS\bu_1(t)-\cS\bu_2(t)\|_V \le d\,\max_{s\in \mathcal{J}}\|{\cal B}(s)\|_{\bf Q_\infty}\,\displaystyle\int_0^t	\|\bu_1(s)-\bu_2(s)\|_V\,ds,\ \
\end{eqnarray*} 	
which prove that the operators $\cR$ and $\cS$ are history-dependent operators. Finally, the regularities  (\ref{f0}) and (\ref{f2}) imply that
$\fb\in C(I;V)$ and, therefore, condition $({f})$ holds, too.	
Theorem \ref{t1m} is now  direct consequence of Corollary \ref{cor1}.
\end{proof}

A second viscoelastic  contact problem for which the abstract results provided in Section \ref{s3} work is the Signorini frictionless contact problem, which models the contact with a perfectly rigid foundation. The statement of this problem is the following.

\medskip
\begin{problem}\label{p3m} Find a displacement field
	$\bu \colon \Omega\times I\to\mathbb{R}^d$
	and a  stress field $\bsigma\colon \Omega\times I\to\mathbb{S}^d$
	such that $(\ref{1m})$--$(\ref{4m})$, $(\ref{6m})$ hold
	for all $t\in I$ and, moreover,
	\begin{equation}\label{15mm}
	u_\nu(t)\le0,\quad\sigma_\nu(t)\le 0,\quad \sigma_\nu(t)u_\nu(t)=0\end{equation}
	for all $t\in I$.
\end{problem}

\medskip

We assume conditions (\ref{A})--(\ref{f2}) and use notation (\ref{8m}), (\ref{10m}) and (\ref{12m}). Moreover,  we consider
the set $U$ and  the function $j:U\to\real$
defined by 
\begin{eqnarray}
&&\label{Um}U=\{\,\bv\in V : v_\nu \le 0\ \  \hbox{a.e. on}\
\Gamma_3\,\},\\ [2mm]
&&\label{jp}j(\bv)=\bzero\qquad\forall\, \bv\in U.
\end{eqnarray}

Note that in this case the function $j$ does not depend on the first variable and, therefore, using notations (\ref{Jna}), (\ref{Cnt})
with  $X=V$, $K=U$ we deduce that $J={\rm I}_U$, $C={\rm N}_U(\bzero_V)$ and $C(t)=\fb(t)-{\rm N}_U(\bzero_V)$ for all $t\in I$.
Then, using arguments similar to those used in the study of Problem~\ref{p1m}, based on the Green formula and Lemma \ref{l0}, 
we derive the following variational formulation of Problem~\ref{p3m}.
\begin{problem}\label{p4m}
	Find a displacement field $\bu \colon I\to V$  such that
	\begin{equation}\label{a10}
	-\bu(t)\in {\rm N}_{C(t)}(A\bu(t)+\cS \bu(t))\qquad\forall\,t\in I.
	\end{equation}	
\end{problem}

In the study of  Problem~\ref{p4m} we have the following existence and uniqueness result. 

\medskip
\begin{theorem}\label{t2m}
	Assume that $(\ref{A})$--$(\ref{f2})$ hold.
	Then Problem~$\ref{p4m}$ has a unique solution $\bu \in  C(I;U)$.
\end{theorem}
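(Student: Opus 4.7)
The plan is to apply Corollary \ref{cor2n} on the Hilbert space $X = V$ with the constraint set $K = U$ defined in (\ref{Um}), the operator $A\colon V\to V$ from (\ref{8m}), the operator $\cS\colon C(I;V)\to C(I;V)$ from (\ref{10m}), the function $j\colon U\to\R$ from (\ref{jp}), and the load $\fb\in C(I;V)$ from (\ref{12m}). Once all the hypotheses of Corollary \ref{cor2n} are checked, the existence of a unique solution $\bu\in C(I;U)$ of the inclusion (\ref{a10}) follows at once, which is exactly the conclusion of the theorem.

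First I would verify condition $(K)$: the set $U$ is clearly nonempty (it contains $\bzero_V$), closed in $V$ (the trace map $\bv\mapsto v_\nu$ is continuous from $V$ to $L^2(\Gamma_3)$ and $U$ is the preimage of a closed half-space), convex (obvious), and a cone (if $v_\nu\le 0$ a.e.\ on $\Gamma_3$ then $(\lambda\bv)_\nu =\lambda v_\nu\le 0$ for all $\lambda\ge 0$). Condition $(A)$ follows from the assumptions (\ref{A})(b) and (\ref{A})(c) exactly as in the proof of Theorem \ref{t1m}, giving strong monotonicity with $m_A=m_{\cal A}$ and Lipschitz continuity with $L_A = L_{\cal A}$. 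The regularity $\fb\in C(I;V)$ in condition $(f)$ is a standard consequence of (\ref{f0})--(\ref{f2}) via definition (\ref{12m}).

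Next I would check that $\cS$ is a history-dependent operator from $C(I;V)$ to $C(I;V)$. Using (\ref{10m}), the Cauchy--Schwarz inequality in $Q$, and the bound (\ref{pmp}), for any compact $\mathcal{J}\subset I$ and any $\bu_1,\bu_2\in C(I;V)$ one gets, for every $t\in \mathcal{J}$,
\[
\|\cS\bu_1(t)-\cS\bu_2(t)\|_V\le d\,\max_{s\in\mathcal{J}}\|{\cal B}(s)\|_{\bf Q_\infty}\int_0^t\|\bu_1(s)-\bu_2(s)\|_V\,ds,
\]
which is exactly the defining inequality (\ref{2.1}) with $L_\mathcal{J}^\cS=d\max_{s\in\mathcal{J}}\|{\cal B}(s)\|_{\bf Q_\infty}$.

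Finally, since $j\equiv 0$ on $U$, it is trivially convex, positively homogeneous and Lipschitz continuous on $U$, so the last hypothesis of Corollary \ref{cor2n} is satisfied. With $J={\rm I}_U$, $C={\rm N}_U(\bzero_V)$ and $C(t)=\fb(t)-C$ as noted just before the statement, Corollary \ref{cor2n} delivers a unique $\bu\in C(I;U)$ satisfying (\ref{a10}). There is essentially no hard step here: all work has been front-loaded into Theorem \ref{t1}, Corollary \ref{cor2n}, and the variational reformulation carried out just above via the Green formula (\ref{Green}) and Lemma \ref{l0}. The only mildly delicate point to keep in mind is that, because $j\equiv 0$ and $K=U$ is itself a cone, the set $C(\eta)$ reduces to ${\rm N}_U(\bzero_V)$ and is independent of $\eta$, which is why the simpler Corollary \ref{cor2n} applies here rather than the full Theorem \ref{t1}.
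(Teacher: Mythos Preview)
Your proposal is correct and follows exactly the same approach as the paper: the paper's proof simply states that Theorem~\ref{t2m} is a direct consequence of Corollary~\ref{cor2n}, with arguments analogous to those in the proof of Theorem~\ref{t1m}, and skips the details. You have written out precisely those details---verifying $(K)$ for $U$, $(A)$ for the operator in (\ref{8m}), the history-dependence of $\cS$ in (\ref{10m}), the regularity $(f)$ of $\fb$, and the trivial properties of $j\equiv 0$---so your argument is a faithful expansion of what the paper intends.
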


\begin{proof} The proof of Theorem \ref{t2m} is a direct consequence of Corollary \ref{cor2n}. It is based on arguments similar  to those used in the proof of Theorem
\ref{t1m} and, for this reason, we skip the details.	
\end{proof}

\section{A frictional viscoelastic contact problem}\label{s6}

For the  model we consider in this section the contact is frictional. As a consequence, its variational formulation leads to a sweeping process in which the unknown is the displacement field. The model is formulated as follows.

\medskip
\begin{problem}\label{p5m} Find a displacement field
	$\bu \colon \Omega\times I\to\mathbb{R}^d$
    and a  stress field 	$\bsigma \colon \Omega\times I\to\mathbb{S}^d$
	such that
	\begin{align}
	\label{1mz} \bsigma(t)={\cal A} \bvarepsilon(\dot{\bu}(t))+{\cal E} \bvarepsilon({\bu}(t))+\int_0^t{\cal B}(t-s)&\bvarepsilon(\dot{\bu}(s))\,ds\quad&{\rm in}\
	&\Omega,\\[3mm]
	\label{2mz} {\rm Div}\,\bsigma(t)+\fb_0(t)&=\bzero\quad&{\rm in}\ &\Omega,\\[2mm]
	\label{3mz} \bu(t)&=\bzero &{\rm on}\ &\Gamma_1,\\[2mm]
	\label{4mz} \bsigma(t)\bnu&=\fb_2(t)\quad&{\rm on}\ &\Gamma_2,\\[4mm]
	\label{5mz}
	u_\nu(t)&=0\quad&{\rm on}\ &\Gamma_3,\\[4mm]
	\hspace{0mm}\left.\begin{array}{ll}
	\|\bsigma_\tau(t)\|\le F(\displaystyle\int_0^t\|\dot{\bu}_\tau(s)\|\,ds),\\
	[6mm] -\bsigma_\tau(t)=F(\displaystyle\int_0^t\|\dot{\bu}_\tau(s)\|\,ds)\frac{	\dot{\bu}_\tau(t)}{\|\dot{\bu}_\tau(t)\|}\quad{\rm if}\quad\dot{\bu}_\tau(t)\ne0
	\end{array}\right\}
	\hspace{-28mm}&\ 
	&{\rm on}\ &\Gamma_3
	\label{6mz}
	\end{align}
	for all $t\in I$ and, moreover,
	\begin{equation}	\label{7mz}
	\bu(0)=\bu_0\qquad\quad\ {\rm in}\ \ \Omega.
	\end{equation}
\end{problem}

The equations and boundary conditions in Problem \ref{p5m} have a similar meaning to those in Problems \ref{p1m} and  \ref{p3m} studied in the previous section. Note that (\ref{1mz}) represents the constitutive law in which now ${\cal A}$ represents the viscosity operator, ${\cal E}$ is the elasticity operator and, again, ${\cal B}$ represents the relaxation tensor. Condition (\ref{5mz}) represents the bilateral contact condition; it shows that there is no separation between the body and the foundation, during the process. Condition (\ref{6mz}) represents a total-slip version of Coulomb's law of dry friction. Here $F$ denotes the friction bound and the quantity
\[T(\bx,t)=\int_0^t\|\dot{\bu}_\tau(\bx,s)\|\,ds\]
represents the total slip-rate in the point $\bx\in\Gamma_3$, at the time moment $t\in I$. Considering a friction bound $F$ which depends on the total slip rate describes the rearrangement of the contact surfaces during the sliding process. Finally, condition (\ref{6mz}) represents the initial condition in which $\bu_0$ denotes a given initial displacement field.

The weak solution of the mechanical problem (\ref{1m})--(\ref{6m}) will be sought in  the space
\[
V_1=\{\,\bv\in V:\  v_\nu =0\ \ {\rm on\ \ }\Gamma_3\,\}.
\]
Note that $V_1$ is a closed subspace of the space $V$ and, therefore, is a Hilbert space equipped with the inner product $(\cdot,\cdot)_V$ and the associated norm $\|\cdot\|_V$.

\medskip

In the study of the mechanical problem (\ref{1mz})--(\ref{7mz}) we
assume that the viscosity operator ${\cal A}$ and the relaxation tensor satisfy conditions (\ref{A}) and  (\ref{B}), respectively. Moreover, the density of applied forces and the friction bound are such that   (\ref{f0}),  (\ref{f2}) and  (\ref{F}), hold. Finally, for the elasticity operator and the initial displacement we assume that
\begin{equation}
\left\{ \begin{array}{ll} {\rm (a)\ } {\cal E}:\Omega\times
\mathbb{S}^d\to \mathbb{S}^d.\\[2mm]
{\rm (b)\  There\ exists}\ L_{\cal E}>0\ {\rm such\ that}\\
{}\qquad \|{\cal E}(\bx,\bvarepsilon_1)-{\cal E}(\bx,\bvarepsilon_2)\|
\le L_{\cal E} \|\bvarepsilon_1-\bvarepsilon_2\|\\
{}\qquad\qquad\forall\,\bvarepsilon_1,\bvarepsilon_2
\in \mathbb{S}^d,\ {\rm a.e.}\ \bx\in \Omega.\\[2mm]
{\rm (c)\ The\ mapping}\ \bx\mapsto
{\cal E}(\bx,\bvarepsilon)\ {\rm is\ measurable\ on\
}\Omega,\\
{}\qquad {\rm for\ any\ }\bvarepsilon\in \mathbb{S}^d.\\[2mm]
{\rm (d)}\ {\cal E}(\bx,\bzero)=\bzero\  \ {\rm a.e.}\  \bx\in \Omega.
\end{array}\right.
\label{E}
\end{equation}

\begin{equation}\label{u0}
\bu_0\in V_1.
\end{equation}

\medskip
We now turn to the variational formulation of  Problem \ref{p5m} and, to this end, we assume in what follows that
$(\bu,\bsigma)$ represents a couple of regular functions which satisfies (\ref{1mz})--(\ref{7mz}).  
Then, using  standard 
arguments based on the Green formula (\ref{Green}) we find that

\begin{eqnarray}
&&\hspace{-8mm}\label{7mbz}
\int_\Omega\bsigma(t)\cdot
(\bvarepsilon(\bv)-\bvarepsilon(\dot{\bu}(t)))\,dx\\[2mm]
&&\hspace{-8mm}\quad+  \int_{\Gamma_3}F\Big(\displaystyle\int_0^t\|\dot{\bu}_\tau(s)\|\,ds\Big)( \|{\bv}_\tau(s)\|- \|\dot{\bu}_\tau(s)\|)\,da
\nonumber\\[2mm]
&&\hspace{-8mm}\qquad
\ge \int_\Omega\fb_0(t)\cdot
(\bv-\dot{\bu}(t))\,dx+\int_{\Gamma_2}\fb_2(t)\cdot
(\bv-\dot{\bu}(t))\,da\nonumber
\end{eqnarray}
for all $\bv \in V_1$ and every $t\in I$.
We now introduce the operators $A\colon V_1 \to V_1$, $ B\colon V_1 \to V_1$, $\cR \colon C(I ; V_1)\to C(I; L^2(\Gamma_3))$,
$\cS \colon C(I ; V_1)\to C(I; V_1)$, 
the functional
$j\colon L^2(\Gamma_3)\times V_1\to\real$ and the function $\fb \colon I\to V_1$ defined by 
\begin{eqnarray}
&&\label{8mz} 
(A\bu,\bv)_V =\int_\Omega{\cal A}\bvarepsilon(\bu)\cdot\bvarepsilon(\bv)\,dx
\quad \mbox{for all} \ \bu,\bv\in V_1,\\[2mm]
&&\label{10mbz}(B\bu,\bv)_V=\int_\Omega{\cal E}\bvarepsilon(\bu)\cdot\bvarepsilon(\bv)\,dx\quad \mbox{for all} \ \bu,\bv\in V_1,\\[2mm]
&&\label{9mz}\cR\bu(t)=F\Big(\int_0^t \|{\bu}_\tau(t)\|\,ds\Big)\quad \mbox{for all} \ \bu \in 
C(I;V_1),\\ [2mm]
&&\label{10mz}(\cS\bu(t),\bv)_V=(\int_0^t{\cal B}(t-s)\bvarepsilon(\bu(s))\,ds,\bvarepsilon(\bv))_Q\\ [2mm]
&&\qquad\qquad \mbox{for all} \ \bu, \bv\in 
C(I;V_1),\nonumber
\\[2mm]
&&\label{11mz}j(\eta,\bv)=
\int_{\Gamma_3}\eta\, \|{\bv}_\tau(t)\|\, da \quad \mbox{for all} \ \eta\in L^2(\Gamma_3),\ \bv\in V_1,\\
[2mm]
&&\label{12mz}(\fb(t), \bv)_V=\int_\Omega\fb_0(t)\cdot\bv\,dx +
\int_{\Gamma_2}\fb_2(t)\cdot\bv\,da\quad \mbox{for all} \ \bv\in V_1,\ 
t\in I.
\end{eqnarray}

We now substitute equation (\ref{1mz})  in (\ref{7mbz}), then we use notation  (\ref{8mz})--(\ref{12mz}) to see that
\begin{equation}\label{13mz}
j(\cR\dot{\bu}(t),\bv)-j(\cR\dot{\bu}(t),\dot{\bu}(t))\ge(\fb(t)-A\dot{\bu}(t)-B\bu(t)-\cS\dot{\bu}(t),\bv-\dot{\bu}(t))_V
\end{equation}
for all $\bv \in V_1$ and  $t\in I$. Let
\begin{equation*}\label{Ctpz}
C(\eta)=\partial j(\eta,\bzero_V),\qquad C(\eta,t)=\fb(t)-C(\eta)\qquad \mbox{for all} \ \eta\in L^2(\Gamma_3),\ 
t\in I.
\end{equation*}

Take $X=V_1$, $K=V_1$ and note that in this case conditions $(K)$ and $(j)(a)$ are satisfied, the later one being the consequence of the trace inequality (\ref{trace}).
Then, using inequality (\ref{13mz}), Lemma \ref{l0} with $J=j$ and the initial condition (\ref{7mz}),
we derive the following variational formulation of Problem~\ref{p5m}.
\begin{problem}\label{p6m}
	Find a displacement field $\bu \colon I\to V_1$  such that
	\begin{eqnarray*}
	&&-\dot{\bu}(t)\in {\rm N}_{C(\mathcal{R} \dot{\bu}(t),t)}(A\dot{\bu}(t)+B\bu(t)+\cS \dot{\bu}(t))\qquad\forall\,t\in I,\\ [2mm]
	&&\ \ \bu(0)=\bu_0.\label{z11}
	\end{eqnarray*}	
\end{problem}

In the study of  Problem~\ref{p6m} we have the following existence and uniqueness result. 
\begin{theorem}\label{t3m}
	Assume that $(\ref{A})$--$(\ref{F})$, $(\ref{E})$, $(\ref{u0})$ hold.
	Then Problem~$\ref{p6m}$ has a unique solution $\bu \in  C^1(I;V_1)$.
\end{theorem}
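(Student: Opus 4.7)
The plan is to identify Problem~\ref{p6m} as an instance of the abstract sweeping process (\ref{in})--(\ref{im}) from Section~\ref{s4} and then invoke Corollary~\ref{cor3n}. I take $X=V_1$, $K=V_1$, $Y=L^2(\Gamma_3)$ and use the operators $A$, $B$, $\cR$, $\cS$, the function $j$ and the element $\fb(t)$ defined in (\ref{8mz})--(\ref{12mz}). Since $V_1$ is a closed subspace of $V$ containing $\bzero$, it is a closed convex cone (trivially, being the whole ambient space here), so condition $(K)$ is immediate.

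Next I would verify the remaining hypotheses of Corollary~\ref{cor3n} one by one. Condition $(A)$ follows directly from the strong monotonicity and Lipschitz properties of ${\cal A}$ in (\ref{A})(b),(c), with $m_A=m_{\cal A}$ and $L_A=L_{\cal A}$; condition $(B)$ for the operator defined in (\ref{10mbz}) is a consequence of the Lipschitz assumption (\ref{E})(b); condition $(u_0)$ is exactly (\ref{u0}); and the regularities (\ref{f0}), (\ref{f2}) yield $\fb \in C(I;V_1)$, so $(f)$ holds. The function $j$ in (\ref{11mz}) is easily seen to be convex, positively homogeneous and Lipschitz with respect to its second argument (with constant controlled through the trace inequality (\ref{trace})), and a short computation based on (\ref{trace}) gives the mixed estimate in $(j)(b)$ with $\alpha_j=c_0$. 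The history-dependence of $\cR$ follows from the Lipschitz condition (\ref{F})(a) on $F$ combined with (\ref{trace}), yielding $L_\mathcal{J}^\mathcal{R}=c_0 L_F$; that of $\cS$ is obtained from the continuity (\ref{B}) of the relaxation tensor and inequality (\ref{pmp}), giving $L_\mathcal{J}^\mathcal{S} = d\,\max_{s\in\mathcal{J}}\|{\cal B}(s)\|_{\bf Q_\infty}$. These estimates are parallel to those carried out in the proof of Theorem~\ref{t1m}, only adapted to the frictional setting, where tangential norms on $\Gamma_3$ replace the normal components $u_\nu^+$.

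Once these verifications are in place, Lemma~\ref{l0} (applied pointwise in $t$ with $\eta=\cR\dot{\bu}(t)$ and $z=A\dot{\bu}(t)+B\bu(t)+\cS\dot{\bu}(t)$) shows that the inclusion in Problem~\ref{p6m} is equivalent to the variational inequality (\ref{13mz}) derived via the Green formula; hence Problem~\ref{p6m} coincides with the abstract Problem~\ref{p2} for the above data. Corollary~\ref{cor3n} then delivers a unique $\bu \in C^1(I;V_1)$ with $\dot{\bu}\in C(I;V_1)$, which is exactly the conclusion of Theorem~\ref{t3m}. I do not anticipate any genuine obstacle: the argument is essentially mechanical once the abstract machinery of Sections~\ref{s3}--\ref{s4} is in hand, and the only mildly delicate point is checking the mixed Lipschitz estimate in $(j)(b)$ for the tangential functional $\eta\mapsto\int_{\Gamma_3}\eta\|\bv_\tau\|\,da$, which nonetheless reduces to a direct application of the trace inequality.
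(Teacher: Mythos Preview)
Your proposal is correct and follows essentially the same approach as the paper: identify Problem~\ref{p6m} with the abstract sweeping process of Section~\ref{s4} on $X=K=V_1$, $Y=L^2(\Gamma_3)$, verify the hypotheses $(K)$, $(A)$, $(B)$, $(j)$, $(f)$, $(u_0)$ and the history-dependence of $\cR$, $\cS$ exactly as in the proof of Theorem~\ref{t1m}, and then apply Corollary~\ref{cor3n}. The paper's own proof is in fact slightly more terse than yours, simply pointing back to the verifications in Theorem~\ref{t1m} rather than restating the constants.
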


\begin{proof}  We use Corollary \ref{cor3n} on the spaces $X=V_1$, $Y=L^2(\Gamma_3)$, with $K=V_1$. As already mentioned, assumptions $(K)$ and $(j)(a)$ are obviously satisfied. Moreover, it follows from arguments similar to those in the proof of Theorem \ref{t1m}  that assumptions $(j)(b)$, $(A)$, $(f)$ hold too, and the operators $\cR$ and $\cS$ are history-dependent operators. In addition, assumptions (\ref{E}) and (\ref{u0})  guarantee that conditions $(B)$  and $(u_0)$ are satisfied. It follows from above that we are in a position to apply Corollary \ref{cor3n} to conclude the proof.
\end{proof}

\section{Concluding remarks}
Using tools from convex analysis and fixed points theory, we obtained existence and uniqueness results for a class of time-dependent inclusions in Hilbert spaces. These results were used  to provide the unique solvability of a new class of Moreau's first order sweeping processes with constraints in velocity. Our results are of  interest in the study of quasistatic mathematical models of contact with deformable bodies. Two frictionless and a frictional viscoelastic contact problems were introduced in oder to illustrate these abstract results. Nevertheless, several questions and problems still remain open and need to be investigated in the future. One of these questions is the following: is the smallness condition \eqref{smal} an intrinsic condition in the study of Problem \ref{p1} or it is only a mathematical tool? An open problem is  to extend our results in the case when the data has an $L^p$-regularity, with $p\in [1,+\infty]$. Note that, in this case, there is a need to replace the fixed point Theorem \ref{t00} with an appropriate $L^p$-version. The study of second-order evolutionary sweeping processes would be a valuable extension of the result of this paper. In addition, problems related to the optimal control of time-dependent inclusions and sweeping processes of the form \eqref{i} and \eqref{in}, respectively, represent a topic which deserves to be addressed in the future. All these issues would open the way to important applications in Contact Mechanics.


\section*{Acknowledgement}

\indent This project has received funding from the European Union's Horizon 2020
Research and Innovation Programme under the Marie Sklodowska-Curie
Grant Agreement No 823731 CONMECH.

\end{document}